\begin{document}

\title{Parallel Algorithms for Equilevel Predicates}
\author{Vijay K. Garg} 
\affiliation{%
 \institution{The University of Texas at Austin}
 \city{Austin}
 \state{Texas}
 \postcode{78712}
 \country{USA}}
\email{garg@ece.utexas.edu}
%  \AND

\author{Robert P. Streit} 
\affiliation{%
 \institution{The University of Texas at Austin}
 \city{Austin}
 \state{Texas}
 \postcode{78712}
 \country{USA}}
\email{rpstreit@utexas.edu}

\newcommand{\exampleqedsymbol}{$\triangle$}% for end of examples

\newcommand{\remove}[1]{}
\newcommand{\h}{\hspace*{0.2in}}
\newcommand{\Ra}{\Rightarrow}
\newcommand{\ra}{\rightarrow}
\newcommand{\CC}{\mathcal{L}}
\newcommand{\CB}{{L_B}}
\newcommand{\CGG}{G}
\newcommand{\RR}{\mathbf{R}}

\bibliographystyle{unsrt}

\keywords{Detecting Predicates; Distributive Lattices; Equilevel Predicates}
%\keywords{}% mandatory: Please provide 1-5 keywords
% Author macros::end %%%%%%%%%%%%%%%%%%%%%%%%%%%%%%%%%%%%%%%%%%%%%%%%%
\begin{CCSXML}
<ccs2012>
   <concept>
       <concept_id>10003752.10003809.10010170</concept_id>
       <concept_desc>Theory of computation~Parallel algorithms</concept_desc>
       <concept_significance>500</concept_significance>
       </concept>
 </ccs2012>
\end{CCSXML}

\ccsdesc[500]{Theory of computation~Parallel algorithms}

\begin{abstract}
We define a new class of predicates called {\em equilevel predicates} on a distributive lattice which eases the analysis of parallel algorithms.
Many combinatorial problems such as the vertex cover problem, the bipartite matching problem, and the minimum spanning tree problem can be modeled as detecting an equilevel predicate.
The problem of detecting an equilevel problem is \NP-complete, but equilevel predicates with
the {\em helpful} property can be detected in polynomial time in an {\em online} manner. An equilevel predicate has
the {\em helpful} property with a polynomial time algorithm if the algorithm can return a nonempty set of indices such that advancing on any of them can be used to detect the predicate.
Furthermore, the refined {\em independently helpful} property allows online parallel detection of such predicates in \NC. When the {\em independently helpful} property holds, advancing on all the specified indices
in parallel can be used to detect the predicate in polylogarithmic time.

We also define a special class of equilevel predicates called {\em solitary} predicates.
Unless \NP = \RP, this class of predicate also does not admit efficient algorithms.
Earlier work has shown that solitary predicates with the {\em efficient advancement} can be detected
in polynomial time. We introduce two properties called the {\em antimonotone advancement} and
the {\em efficient rejection} which yield the detection of solitary predicates in \NC. 
Finally, we identify the minimum spanning tree, the shortest path,
and the conjunctive predicate detection as problems satisfying such properties, giving alternative certifications of their \NC~memberships as a result. 
 \end{abstract}

\maketitle

\section{Introduction}

We introduce a new class of predicates called {\em equilevel predicates} to model and solve many problems in parallel and distributed computing.
These predicates are defined on the elements of a distributive lattice \cite{davey}.
The distributive lattice is such that the height is small, but its size is large.
More concretely, this lattice can be viewed as generated from a 
poset with at most $n$ chains and $m$ elements per chain. 
In this case, the height of the lattice is bounded by $O(nm)$ but the cardinality can be as many as $n^m$ elements.
For efficiency, our methods operate on the underlying poset even though we are interested in finding an
element in the lattice satisfying the given predicate. This approach allows us to obtain algorithms with complexities efficient in $n$ and $m$.

\begin{figure*}
\begin{tabular}{ | l |  l  |  l | l |}
\hline
{\bf Class of Predicates} & {\bf Examples} & {\em Detection Algorithm}  \\
\hline
General Equilevel & Vertex Cover, $k$-conjunctive, Hamiltonian path & No efficient algorithm unless \P = \NP\\
\h With Helpful Property & bipartite matching, basis vectors & \P\\
\h With Independently Helpful Property & minimum spanning tree & \NC\\
\hline
General Solitary  & USAT & No efficient algorithm unless \NP = \RP \\
\h With Efficient Advancement & man-optimal stable marriage, housing allocation & \P\\
\h With Antimonotone Advancement & minimum spanning tree with unique weights & \NC~if the poset has $O(\log^k n)$ height\\
\h With Efficient Rejection & Graph Reachability, Conjunctive Predicates & \NC \\
\hline
\end{tabular}
\caption{Various Classes of  Predicates. Equilevel predicates are the ones that are true on elements of a lattice at a single level. Solitary predicates are the ones that are true on a
single element in the lattice.}
\end{figure*}

A predicate defined on a lattice is equilevel if all the elements of the lattice that satisfy the predicate
are on the same level of the lattice. It can be shown that detecting an equilevel predicate is, in general, a hard problem. For example, the minimum vertex cover problem for a graph can be modeled as detecting an equilevel predicate.
Even so, we will see that there are many equilevel predicates which can be detected efficiently in parallel, and that this efficiency can be implied by a few simple properties.

As a special case, we also consider the class of equilevel predicates which hold
on a single element in the underlying lattice. We call this class {\em solitary}. This class of predicates is related to the previously studied class of \emph{lattice-linear} predicates \cite{DBLP:conf/spaa/Garg20}, which are those predicates closed under the meet operation (or, the infimum operation) on the underlying lattice.
Such predicates have special properties that aid in the design of efficient detection algorithms.
Observe that for any 
lattice-linear predicate $B$ we can define a stronger predicate $B'$ that is satisfied only on the least element in the lattice satisfying $B$.
Then $B'$ is a solitary predicate. For example, consider the distributive lattice of assignments
for the stable marriage problem \cite{gale1962college,gusfield1989stable}. There may be multiple stable marriages; however, there is a unique
man-optimal stable marriage. The predicate that an assignment corresponds to the man-optimal
stable marriage is a solitary predicate.

In this paper, we investigate equilevel and solitary predicates, as well as the conditions giving efficient parallel algorithms for their detection. 
We first show that the problem of equilevel predicate is \NP-complete in general.
Moreover, the implications of this result are far-reaching in the landscape of lattice-linear predicate detection. For example, we show that slight generalizations of lattice-linear predicate-based problems with efficient solutions, such as
the conjunctive predicate detection,
are \NP-complete.
We then define a property on equilevel predicates called the {\em helpful}
property. 
An equilevel predicate has
the {\em helpful} property with a polynomial time algorithm if the algorithm can return a nonempty set of indices such that advancing on any of them can be used to detect the predicate in polynomial time in an online manner. 
We apply the helpful property to modeling and efficient detection of various problems such as bipartite matching and computing bases which span sets of vectors as equilevel predicates. Furthermore, when this is refined to the stronger
{\em independently helpful} property, there exists a simple \NC~algorithm
for problems modeled as equilevel predicates.

Next, a special class of equilevel predicates, called solitary predicates, is identified.
These are those predicates that can only be true on a single element in the lattice, and we show that even with this substantial restriction, detecting these 
predicates is still hard unless \NP=\RP. 
However, prior work implies that solitary predicates can be detected efficiently when they satisfy \emph{efficient advancement}\cite{DBLP:conf/spaa/Garg20}.
We refine the efficient advancement property in a few ways to obtain \NC~algorithms.
In particular, we show that whenever a problem satisfies the
{\em antimonotone} advancement property, it can be solved in \NC~time when the 
poset generating the lattice has a small height. For example, it can be shown that
the problem of the minimum spanning tree satisfies antimonotone advancement
property, which gives an alternative certification of its membership in \NC. Another special class of efficient advancement
property giving \NC~algorithms is the {\em efficient rejection} property. We show that
the shortest path problem \cite{Dijkstra1959}, and 
the conjunctive predicate detection in distributed computations \cite{GargWald:WeakUnstable} satisfy this property.

In summary, this paper makes the following contributions:
\vspace{-.025\baselineskip}
\begin{itemize}
\item 
The paper introduces the class of equilevel predicates.
Detecting a general equilevel predicate is \NP-complete. We show that a slight generalization
of the conjunctive predicate detection where one asks for a global state with exactly
$k$ events satisfying the conjunctive predicate is also \NP-complete.
\item
We show that any equilevel predicate with a helpful property (defined in this paper)
can be detected in polynomial time. The problem of finding the size of a maximum matching
in a bipartite graph falls in this class.
\item
We show that any equilevel predicate with the independently helpful property
can be detected in \NC. The problem of computing a minimum spanning tree in a weighted undirected
graph is shown to be in this class. 
\item 
The paper introduces a subclass of equilevel predicates called solitary
predicates. We show there is no randomized polynomial time
detection algorithm for a solitary predicate unless \NP=\RP.
However, any solitary predicate with the efficient advancement property can be detected in polynomial time.
\item 
The paper introduces two subclasses of solitary predicates: Those with the {\em antimonotone} advancement property
and with the {\em efficient rejection} property. We use these properties to detect solitary predicates
in \NC.
\end{itemize}
\section{Related Work}
Predicates on a distributive lattice whose detection models various combinatorial algorithms have been introduced in
\cite{DBLP:conf/spaa/Garg20}, \cite{DBLP:conf/sss/Garg21}, \cite{DBLP:conf/sss/GuptaK21}.
These papers study the class of predicates called lattice-linear predicates which is equivalent to being 
closed under the meet operation of the lattice. 
It has been shown that the lattice-linear predicate (LLP) algorithm solves many combinatorial optimization problems such as the shortest path problem, the stable marriage problem, and the market clearing price problem \cite{DBLP:conf/spaa/Garg20}. 
This method has been applied to other problems such as dynamic programming problems \cite{Garg:ICDCN22},
the housing allocation problem \cite{DBLP:conf/sss/Garg21}, the minimum spanning tree problem \cite{AlvGar22},
and generalizations of the stable matching problem \cite{DBLP:conf/icdcn/Garg23}.
In \cite{DBLP:conf/sss/GuptaK21}, Gupta and Kulkarni extend LLP algorithms for deriving self-stabilizing algorithms. 
In \cite{DBLP:journals/corr/abs-2302-07207}, Gupta and Kulkarni also
show that multiplication and modulo operations on natural numbers can be modeled
as LLP algorithms. In  \cite{gupta2023lattice}, these authors continue by showing lattice-linearity simplifies the analysis of robot coordination algorithms in an asynchronous setting, and improves upon the algorithm's complexity guarantees.

The above-mentioned work and the work in this paper are based on obtaining solutions to combinatorial optimization and constraint satisfaction problems through predicate detection algorithms.
In the context of distributed monitoring, the technique of predicate detection was introduced by Cooper and Marzullo \cite{CoopMarz:ConsDetGP} and Garg and Waldecker \cite{GarWal:wpdd}.
Regarding existing classes of predicates introduced for this setting, the detection of conjunctive predicates was examined in 
\cite{GargWald:WeakUnstable},
lattice-linear predicates were introduced in \cite{DBLP:journals/dc/ChaseG98}, and regular
predicates were introduced in \cite{MG:dcs01slice}.
Moreover, other classes specific to distributed computing settings, such as 
observer-independent predicates \cite{CDF95}, have also been investigated.

Note that equilevel predicates holding for multiple
elements on one level cannot be closed under taking meets on a lattice. Thus, the LLP algorithm is inapplicable for these predicates.
In contrast, when restricted to solitary predicates LLP algorithms can be applied. 
Thus, to advance this line of work we identify additional properties allowing parallel \NC~ algorithms to
detect solitary predicates.

\section{Equilevel Predicates}

Throughout, we consider problems defined on a distributive lattice representing the domain.
To ease the presentation of our definitions, we consider a specific representation.
By Birkhoff's theorem \cite{Birk1},
any finite distributive lattice $\CC$
can be generated by a finite poset $P$. Let that poset $P$ consist of $n$ chains, i.e.,
$P$ can be decomposed into $n$ chains $P_1, P_2, \ldots P_n$ of distinct elements.
Thus, we can assume $\CC$ is represented this way without loss of generality, and every element $G \in \CC$ can be viewed as an ideal (or, downset) of $P$.
Moreover, when examining parallel algorithms we imagine that there are $n$ processes where each process corresponds to exactly one of these chains.
In this way, we treat each element $G\in\CC$ as a global state in the parallel computation, and the largest element of the chain $P_i$ present in $G$ as the local state of process $i$.
It follows then that a process \emph{advances} a computation by augmenting the current global state with the next element in its respective chain.

Before continuing, let $G[i]$ denote the
number of elements in $G$ from $P_i$, while $|G|$ denotes the cardinality counting all elements in the ideal representing $G$.
Finally, $\bot$ and $\top$ correspond to the bottom and top elements of the lattice respectively.
Then, a predicate on $\CC$ is \emph{equilevel} if all the satisfying elements are on the same level of the lattice.
\remove{The level is given by the unique function $\rho:\CC\to\mathbb{N}$ such that $\rho$ applied to the bottom element equals one, $\rho$ is compatible with the ordering, i.e. $G < H \implies \rho(G) < \rho(H)$, and that $\rho$ recovers the covering relation of the lattice in the sense that $\rho(G) = \rho(H) - 1$ whenever $H$ covers $G$.
Then, the value $\rho(G)$ gives the level of $G$ in $\CC$.}
\begin{definition}[Equilevel Predicate]

A Boolean predicate $B$ is {\em {equilevel}} with respect to a lattice $\CC$
iff
$$\forall G, H \in \CC: B(G) \wedge B(H) \Rightarrow |G| = |H|.$$
\end{definition}
\remove{
\begin{figure}
\begin{center}
\begin{tikzpicture}[scale=0.5]
\draw (0,0) -- (-2,4) -- (0,8);
\draw (0,0) -- (2,4) -- (0,8);
%\draw[color=red] (0,3) circle (2pt);
%\draw[color=black] (0.4,3.1) node {$x$};
\draw[color=red] (-1,4) circle (2pt);
\draw[color=red] (+1,4) circle (2pt);
%\draw[color=red] (+1,5) circle (2pt);
\end{tikzpicture}
\caption{Search space with elements that satisfy the given predicate $B$. The satisfying solutions are shown in red. They are all at the same level.
 \label{fig:equilevel}
}
\end{center}
\end{figure}}
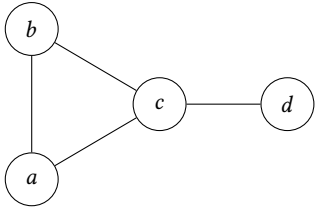
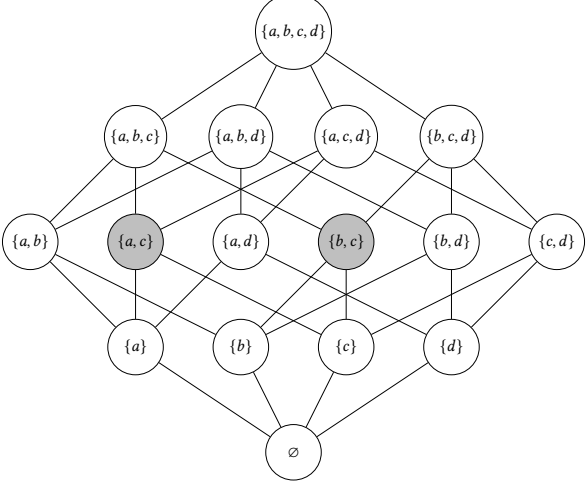
\begin{figure}
    \begin{subfigure}{0.45\textwidth}
  \centering
  \begin{tikzpicture}
    \tikzstyle{vertex}=[circle, draw, inner sep=2pt, minimum size=20pt]
    
    \node[vertex] (a) at (0,0) {$a$};
    \node[vertex] (b) at (0,2) {$b$};
    \node[vertex] (c) at (1.7,1) {$c$};
    \node[vertex] (d) at (3.4,1) {$d$};
    
    \draw (a) -- (c);
    \draw (a) -- (b);
    \draw (c) -- (d);
    \draw (b) -- (c);
  \end{tikzpicture}
  \caption{A graph $G = (\{a, b, c, d\}, \{(a, b), (b, c), (a, c), (c, d)\})$.}
\end{subfigure}
\begin{subfigure}{0.45\textwidth}
\centering
\scalebox{.7}{
    \begin{tikzpicture}
    \tikzstyle{vertex}=[circle, draw, inner sep=2pt, minimum size=30pt]
    \tikzstyle{filled}=[fill=lightgray] % Define a style for filled nodes
  
    \node[vertex] (empty) at (0,0) {$\varnothing$};
    \node[vertex] (a) at (-3,2) {$\{a\}$};
    \node[vertex] (b) at (-1,2) {$\{b\}$};
    \node[vertex] (c) at (1,2) {$\{c\}$};
    \node[vertex] (d) at (3,2) {$\{d\}$};
    \node[vertex] (ab) at (-5,4) {$\{a, b\}$};
    \node[vertex] [filled] (ac) at (-3,4) {$\{a, c\}$};
    \node[vertex] (ad) at (-1,4) {$\{a, d\}$};
    \node[vertex] [filled] (bc) at (1,4) {$\{b, c\}$};
    \node[vertex] (bd) at (3,4) {$\{b, d\}$};
    \node[vertex] (cd) at (5,4) {$\{c, d\}$};
    \node[vertex] (abc) at (-3,6) {$\{a, b, c\}$};
    \node[vertex] (abd) at (-1,6) {$\{a, b, d\}$};
    \node[vertex] (acd) at (1,6) {$\{a, c, d\}$};
    \node[vertex] (bcd) at (3,6) {$\{b, c, d\}$};
    \node[vertex] (abcd) at (0,8) {$\{a, b, c, d\}$};
  
    \draw (empty) -- (a);
    \draw (empty) -- (b);
    \draw (empty) -- (c);
    \draw (empty) -- (d);
    \draw (a) -- (ab);
    \draw (a) -- (ac);
    \draw (a) -- (ad);
    \draw (b) -- (ab);
    \draw (b) -- (bc);
    \draw (b) -- (bd);
    \draw (c) -- (ac);
    \draw (c) -- (bc);
    \draw (c) -- (cd);
    \draw (d) -- (ad);
    \draw (d) -- (bd);
    \draw (d) -- (cd);
    \draw (ab) -- (abc);
    \draw (ab) -- (abd);
    \draw (ac) -- (abc);
    \draw (ac) -- (acd);
    \draw (ad) -- (abd);
    \draw (ad) -- (acd);
    \draw (bc) -- (abc);
    \draw (bc) -- (bcd);
    \draw (bd) -- (abd);
    \draw (bd) -- (bcd);
    \draw (cd) -- (acd);
    \draw (cd) -- (bcd);
    \draw (abc) -- (abcd);
    \draw (abd) -- (abcd);
    \draw (acd) -- (abcd);
    \draw (bcd) -- (abcd);
\end{tikzpicture}}
\caption{Hasse diagram of the vertex powerset, ordered by inclusion.}
\end{subfigure}
    \caption{Above is the example of an equilevel predicate in the context of the minimum vertex cover problem. Examine the given graph (top). The minimum vertex covers are of cardinality 2, given by $\{a, c\}$ and $\{b, c\}$. This gives an equilevel predicate over the inclusion order (bottom) such that the satisfying elements (shaded) are all on level 2.}
    \label{fig:equilevel}
\end{figure}
Fig. \ref{fig:equilevel} shows a search space (modeled using a distributive lattice) and 
a predicate and its set of satisfying solutions which are all on the same level of the lattice.
We now define the notion of ``detecting'' a predicate.

\begin{definition}[Predicate Detection]
Given a poset $P$ generating a lattice 
$\CC$, and a Boolean predicate $B$, if $B$ is true on some element $G$ of lattice $\CC$, then decide ``yes.''
Otherwise, decide ``no.''
\end{definition}

For example, consider the minimum vertex cover problem in an undirected graph $(V, E)$.  A subset of vertices $V'$ satisfies $B$ if $V'$ is the vertex cover of minimum size. Here, the lattice we consider is the
Boolean lattice of the vertex set. It is clear that $B$ is an equilevel predicate since there may be multiple minimum vertex covers and they must all be of the same size. Then, we observe that detecting an equilevel
predicate is \NP-complete in general as the minimum vertex-cover decision problem
is well-known to be \NP-complete.

As another example of an equilevel predicate,
suppose that we have a bipartite graph $(L, R, E)$ and we are interested in maximum-sized subsets
of $L$ that can be matched. The size of maximum-sized subsets of $L$ that can be matched to elements in $R$
is constant, even though there may be multiple matched sets in $L$. 
For example, consider $L = \{m_1, m_2, m_3\}$, $R = \{w_1, w_2, w_3\}$ and $E = \{ (m_1, w_1), (m_2, w_1), (m_3, w_3) \}$.
Then, we can match $\{m_1, m_3\}$ or $\{m_2, m_3\}$. Both sets are of size $2$.

Equilevel predicates occur in numerous other contexts. 
Consider the problem of finding a minimum weight spanning tree of a connected undirected graph $(V, E)$ with $n$ vertices.
If the edge weights are not unique, then there may be multiple minimum-spanning trees with
equal weight. 
Let the predicate $B$ be ``the set of edges form a spanning tree with minimum weight.''
It is clear that all sets that satisfy the predicate have $n-1$ edges. Thus, in the inclusion lattice over the graph's edges, the sets satisfying the
predicate are all at the same level.

Equilevel predicates are closed under conjunction. If predicates $B_1$ and $B_2$ are true for different levels, then the conjunction is 
false for all elements of the distributive lattice and the predicate $B_1 \wedge B_2$ is trivially equilevel.
Otherwise, if they are true for the same level, then $B_1 \wedge B_2$ is also true only at that level (or possibly, always false).
However, equilevel predicates are not closed under disjunction or complement because these operations may introduce elements satisfying the new predicate at many different levels of the lattice. 

As we have observed, since \NP-complete problems such as vertex cover can be modeled
as detecting an equilevel predicate, the problem of detecting an arbitrary equilevel predicate is NP-hard.
As another example, consider the problem of detecting a Hamiltonian path in an undirected graph.
If we consider the Boolean lattice of all edges, then we only need to consider the subset of edges at level
$n-1$ on a graph with $n$ vertices. Thus, the problem of Hamiltonian path can also be modeled as
detecting an equilevel predicate. 

We now show that some problems in distributed computation with efficient algorithms are equilevel predicates
when slightly generalized.
For example, consider \emph{conjunctive predicates} which are given by the Boolean formula
$l_1 \wedge l_2 \wedge \ldots l_n$ where each $l_i$ is a function only of state local to the $i^\text{th}$ process. 
For this class, there is an efficient algorithm for detecting this predicate in a distributed computation
\cite{GargWald:WeakUnstable}. However, if we generalize the problem by asking for
the conjunctive predicate on a particular level of the lattice, then the problem
can be modeled as an equilevel predicate.
We now show that, in general, asking for a conjunctive predicate on a particular level
is \NP-complete.
\begin{theorem}
Given a distributed computation, deciding
whether there exists a global state with $k$ events satisfying a
given conjunctive predicate is \NP-complete.
\end{theorem}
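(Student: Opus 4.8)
The plan is to establish both membership in \NP{} and \NP-hardness. Membership is routine: a consistent cut $G$ is described by the vector $(G[1],\dots,G[n])$, which has polynomial size, and given such a cut one can verify in polynomial time that it is a genuine ideal of $P$, that $|G| = k$, and that every local predicate $l_i$ holds at $G[i]$; since the predicate is conjunctive, the last check is just $n$ independent local evaluations. So I would spend the bulk of the argument on hardness.

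For hardness I would reduce from \textsc{Clique}: given a graph $H = (V,E)$ with $|V| = p$ and a target $c$, decide whether $H$ contains a clique on $c$ vertices. The idea is to build a distributed computation (equivalently, its poset $P$ of events) so that a consistent cut of a carefully chosen size $k$ is forced to select exactly $c$ vertices together with exactly $\binom{c}{2}$ edges spanned by them, which is possible only when those vertices form a clique. Concretely, I would create one minimal event $t_v$ for each vertex $v$, and for each edge $e=(u,v)$ a chain (a ``gadget'') of $W := p+1$ events all of which causally follow both $t_u$ and $t_v$ (realized by messages from $t_u$ and $t_v$ into the gadget). The conjunctive predicate is defined so that each vertex process is unconstrained while each edge process's local predicate is true only when its local state is $0$ or $W$; this is a function of local state alone, and it forces every edge gadget to be either entirely inside the cut or entirely outside it. Finally I would set $k := c + W\binom{c}{2}$.

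The correctness argument then splits as usual. If $H$ has a clique $C$ of size $c$, taking the cut consisting of the $c$ vertex events $\{t_v : v \in C\}$ together with the fully-included gadgets of the $\binom{c}{2}$ clique edges yields a valid ideal (all endpoints of chosen edges are present), satisfies the conjunctive predicate (every gadget is full, every vertex process is unconstrained), and has exactly $c + W\binom{c}{2} = k$ events. Conversely, suppose some consistent cut $G$ of size $k$ satisfies the predicate. Let $S$ be its selected vertices and $F$ its selected (hence fully-included) edge gadgets, so that $|G| = |S| + W|F|$ with $|S| \le p < W$. Because $|S| < W$, the equation $|S| + W|F| = c + W\binom{c}{2}$ has a unique solution in this range, forcing $|F| = \binom{c}{2}$ and $|S| = c$. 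Consistency forces every endpoint of a chosen edge to lie in $S$, so $F$ is a set of $\binom{c}{2}$ edges spanning at most $c$ vertices; since $\binom{c}{2}$ edges require at least $c$ vertices and exactly $c$ are available, they must realize all $\binom{c}{2}$ pairs, i.e.\ $S$ is a clique of size $c$. As the construction has $p + |E|\,W = O(p^3)$ events and $k$ is polynomial, the reduction is polynomial.

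The step I expect to be the main obstacle is exactly the one the weighting is designed to overcome: a single cardinality constraint must simultaneously pin down two independent quantities, the number of chosen vertices and the number of chosen edges. Using unit-size edge elements would let a cut of the target size trade vertices for edges and escape the clique structure, so the essential trick is to inflate each edge to $W = p+1$ events and use the conjunctive predicate to make gadgets all-or-nothing, after which the total count encodes the pair $(\,|S|,|F|\,)$ in ``base $W$'' and can be decoded uniquely. I would double-check the boundary conditions of this encoding, namely that $|S|$ can never reach $W$ and that partially-filled gadgets are genuinely excluded by the local predicates, since that is where the reduction could silently break.
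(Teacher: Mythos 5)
Your proof is correct, but it takes a genuinely different route to hardness than the paper. The paper reduces from \textsc{Subset Sum}: one process per integer $x_i$ with $x_i$ events, local predicate true only at the initial state and after all $x_i$ events, so a level-$k$ satisfying cut is exactly a subset summing to $k$; since the $x_i$ are given in binary, the paper must then compress the computation's representation (keeping only the two satisfying local states annotated with event counts) to keep the reduction polynomial. You instead reduce from \textsc{Clique}, using the same all-or-nothing trick on each chain (local predicate true only at local state $0$ or $W$) but adding two ingredients the paper does not need: causal edges from vertex events into edge gadgets to enforce that a selected edge has selected endpoints, and the inflation $W=p+1$ so that the single cardinality constraint $k=c+W\binom{c}{2}$ decodes uniquely into ``$c$ vertices and $\binom{c}{2}$ edges.'' What each buys: the paper's reduction is structurally simpler (independent chains, no cross-process causality) but leans on an implicit change of input encoding to dodge pseudo-polynomiality, which is the most delicate point of that argument; yours is strongly polynomial with all events explicit, at the cost of the gadget and the base-$W$ counting argument. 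Your decoding step does implicitly assume $c\le p$ (otherwise $c$ need not be the residue of $k$ modulo $W$), but that is without loss of generality for \textsc{Clique}, so it is not a gap.
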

\begin{proof}
We first show that the problem is in \NP. The global state itself provides a succinct certificate. We can check that all local predicates are true in that global state
and that the global state is at level $k$.

For hardness, we use the subset sum problem. Given a subset problem on $n$ positive integers, $x_1, x_2, \ldots, x_n$ with the requirement to choose a subset that adds up to $k$, we create a computation on $n$ processes as
follows. Each process $P_i$ has $x_i$ events. The local predicate on $P_i$ is true
initially and also after it has executed $x_i$ events. Thus, the local predicate
is true at each process exactly twice. The problem asks us if there is a 
global state with $k$ events in which all local predicates are true. Such a global 
state, if it exists, would choose for every process either the initial local state or the final
local state. All the final states that are chosen correspond to the numbers that have been 
chosen.

To avoid the expansion of the numbers in binary to unary construction, we encode the 
representation of events on a process as follows: Since the conjunctive predicates
can only be true when the local predicates are true, we keep only local states which satisfy their corresponding local predicate and store the number of local events executed so far with them. This leaves two local states at each process: The initial state with zero events
executed until that point, and the final state of the process with the number of events equal to $x_i$ for the $i^\text{th}$ process. Now, the construction
of the computation from the subset sum problem is polynomial in the size of the input.
\end{proof}

In what follows, we restrict our attention to a class of equilevel predicates to one which can be detected efficiently.

\section{Equilevel Predicates with The Helpful Property}
 
One class of equilevel predicates that we can efficiently detect is the 
set of predicates that satisfy the {\em helpful} property.
An equilevel predicate satisfies the helpful property if, whenever it is false on
a global state $G$, we are guaranteed to be able to compute in polynomial time the indices of the local states which are {\em helpful}. Unless the global state $G$
is advanced on {\em at least one} helpful local state the predicate
can never become true.

Before, we define the helpful property, we introduce the
notation $B(\CC)$ to mean that there exists $W \in \CC$ such that $B(W)$.
Now, we define the helpful property.
 
\begin{definition}[Helpful Property]
A Boolean predicate $B$ has the {\em helpful} property with a polynomial time algorithm $\mathcal{A}$ on a finite distributive lattice $\CC$
%generated from a poset $P$
if, for all $G \in \CC$,
\[ \neg B(G) \Rightarrow 
%\forall H \geq G: H[A(G)] \neq G[A(G)] \vee \neg B(H) \wedge 
\mathcal{A}(G)\neq\varnothing \wedge\forall i \in \mathcal{A}(G): helpful(i, G, B).\]
where 
$helpful(i,G,B)$ is defined as
\[ B(\CC) \Rightarrow \exists H > G: (H[i] > G[i]) \wedge B(H).\]
\end{definition}

The definition states that whenever the predicate is not true in a global state $G$, the polynomial time
algorithm $\mathcal{A}$ can return a nonempty set of indices $\mathcal{A}(G)$ such that advancing on any of the corresponding local states
can be used to detect the predicate. 
The $i^\text{th}$ local state is helpful in the global state $G$
if we can advance the global state $G$ on the index $i$ whenever $\neg B(G)$ holds without the risk of missing a satisfying global state $H$.
If there is no global state that satisfies $B$, then $helpful(i,G,B)$ holds vacuously.

The helpful property allows us to efficiently detect
an equilevel predicate satisfying the property.
In all our examples, we also require the algorithm $\mathcal{A}$ to be {\em online}
in the sense that the algorithm $\mathcal{A}$ has access to the poset only on states up to $G$.
Therefore, the computation of helpful indices is based only on the poset of states less than or equal to $G$.
This allows us to use a sequential procedure, given in Algorithm~\ref{fig:alg-equi}, to advance on helpful indices towards a satisfying global state if one exists.

The Algorithm~\ref{fig:alg-equi} works as follows. It starts with the 
global state $G$ initialized to the least global state corresponding to the bottom
element of the lattice $\CC$. If the predicate $B$ is true at the initial global state, then
we are done and we can return $G$. If not, we first check whether $G$ corresponds to the
top element of the lattice. If $G$ is the top element of the lattice, we can safely return that
there is no element in $\CC$ that satisfies $B$. Otherwise, we advance on the index $i$
where $helpful(i,G,B)$ holds.

\begin{algorithm}
\SetAlgoRefName{Equilevel}
{\bf function} GetSatisfying($B$: predicate, $\CC$: Lattice)\\
\h {\bf var} $\CGG$: array[$1 \ldots m$] of int initially $\forall i: \CGG[i] = 0$;\\
 \h {\bf while} $\neg B(\CGG)${\bf do}\\

\h \h {\bf if} ($\CGG$ is the top element of $\CC$)\\
\h \h \h then {\bf return} null;\\
 \h \h {\bf else} \\ 
\h  \h \h  $\CGG[i] := \CGG[i]+1$ where $helpful(i, G, B)$;\\
      \h {\bf endwhile};\\
  \h     {\bf return} $\CGG$; // an optimal solution \\
\caption{A Sequential Algorithm  to find a state satisfying $B$\label{fig:alg-equi}}
\end{algorithm}

Theorem \ref{thm:equi-helpful} gives the correctness
of Algorithm~\ref{fig:alg-equi}.

\begin{theorem} \label{thm:equi-helpful}
    Let $B$ be any equilevel predicate that satisfies a
    helpful property with a polynomial time algorithm $\mathcal{A}$.
    Then, $B$ can be detected online in polynomial time.
\end{theorem}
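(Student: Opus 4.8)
The plan is to prove the correctness and polynomial running time of Algorithm~\ref{fig:alg-equi} (\texttt{GetSatisfying}), which is what carries out the detection. The central device is a loop invariant asserting that the algorithm never advances \emph{past} a satisfying state: at the top of every iteration, if $B(\CC)$ holds then there exists $H \in \CC$ with $H \ge G$ and $B(H)$. Here I use the componentwise reading of the order, $H \ge G \iff \forall k: H[k] \ge G[k]$, which coincides with ideal containment since an ideal meets each chain in a prefix. I would first establish this invariant, then read off from it soundness, the correct handling of the ``no'' instance, and completeness, and finally bound the iteration count and the per-iteration work.

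To prove the invariant I would induct on the iterations. For the base case, $\CGG$ is initialized to $\bot$, so every element of $\CC$ lies above $G$ and the invariant holds trivially. For the inductive step, assume the invariant at an iteration with $\neg B(G)$ and $G \ne \top$, so the algorithm advances. Since $\neg B(G)$, the helpful property guarantees $\mathcal{A}(G) \ne \varnothing$ and that every $i \in \mathcal{A}(G)$ satisfies $helpful(i,G,B)$; the algorithm advances on one such $i$, producing $G'$ with $G'[i] = G[i]+1$ and $G'[j] = G[j]$ for $j \ne i$. Suppose $B(\CC)$ holds. By the induction hypothesis some satisfying element lies above $G$, so the hypothesis of $helpful(i,G,B)$ is met and yields an $H > G$ with $H[i] > G[i]$ and $B(H)$. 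The key point is that this $H$ also lies above $G'$: from $H \ge G$ we get $H[j] \ge G[j] = G'[j]$ for $j \ne i$, while $H[i] > G[i]$ gives $H[i] \ge G[i]+1 = G'[i]$, so $H \ge G'$ componentwise. Thus the invariant is restored, completing the induction. I expect this maintenance step, and in particular pinning down that $helpful$ produces a witness dominating the advanced state $G'$, to be the crux of the argument.

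With the invariant in hand the remaining claims are routine. Soundness: the loop is exited only when $\neg B(G)$ fails, so any returned $G$ satisfies $B(G)$ and the answer ``yes'' is correct. The null branch is taken only when $G = \top$ and $\neg B(G)$; were a satisfying element to exist, the invariant would force some $H \ge \top$ with $B(H)$, i.e.\ $B(\top)$, contradicting $\neg B(\top)$, so reporting ``no'' is correct. Termination and completeness: each iteration strictly increases $|G|$ by one, and $|G|$ is bounded by $|\top| = |P| = O(nm)$, so at most $O(nm)$ iterations occur; moreover, whenever a satisfying element exists the invariant prevents the loop from reaching $\top$ unsatisfied, so it must exit through $B(G)$ and return an actual satisfying state. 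Finally, each iteration performs a single (polynomial-time) test of $B$, one invocation of the polynomial-time algorithm $\mathcal{A}$, and one advance; since $\mathcal{A}$ consults only states $\le G$ and $G$ grows monotonically, the procedure is online. Multiplying the polynomial per-iteration cost by the $O(nm)$ iterations yields the claimed polynomial bound.
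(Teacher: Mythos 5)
Your proof is correct and validates the same algorithm the paper relies on (Algorithm~\ref{fig:alg-equi}), with the same $O(nm)$ bound on the number of iterations and the same accounting of per-iteration cost (one test of $B$, one call to $\mathcal{A}$). The difference lies in how correctness is handled. The paper's proof is essentially only a complexity analysis: correctness of the returned answer is dispatched in a single sentence (``Since $B$ is an equilevel predicate, the algorithm is guaranteed to return a minimal state that satisfies $B$'') with no supporting argument. You supply that missing argument explicitly via the loop invariant that, whenever $B(\CC)$ holds, some satisfying state lies above the current $G$; the maintenance step correctly extracts from $helpful(i,G,B)$ a witness $H$ with $H[i]>G[i]$ and $H[j]\ge G[j]$ elsewhere, hence $H\ge G'$, and the null branch is then justified by instantiating the invariant at $\top$. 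Two remarks. First, the invariant at $G'$ follows directly from $helpful(i,G,B)$ under the assumption $B(\CC)$ --- the hypothesis of $helpful$ is just $B(\CC)$, not the induction hypothesis --- so your induction is harmless but slightly redundant. Second, your argument never actually invokes the equilevel assumption: the helpful property alone yields correct online detection. Equilevelness is only needed for the stronger claim buried in the paper's proof that the returned state is a minimal satisfying state (all satisfying states sit at one level, and $H\ge G$ with $|H|=|G|$ forces $H=G$). So your proof is a strictly more rigorous, and in fact slightly more general, version of the paper's argument rather than a different route.
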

\begin{proof}
    Algorithm~\ref{fig:alg-equi} starts with the least element in the lattice $\CC$. The {\em while} loop can execute at most $mn$ times where $n$ is the number of chains and $m$ is the maximum height of any chain. The algorithm $\mathcal{A}$ to compute $helpful$ has polynomial complexity. Hence, Algorithm~\ref{fig:alg-equi} also has polynomial complexity. The algorithm $\mathcal{A}$ has access to the lattice $\CC$ only 
    up to $\CGG$.

    Since $B$ is an equilevel predicate, the algorithm is guaranteed to return a minimal state that 
    satisfies $B$.
\end{proof}

By {\em online} detection of the predicate, we mean that the algorithm has access to the lattice 
only up to the current global state $\CGG$. For example, when we consider the stable marriage problem, the
vector $\CGG$ denotes the current assignment of men to women. The algorithm needs to make its decisions
based only on the information on men preferences and women preferences up to that point.
This interpretation is different from the online algorithm in the standard algorithm literature, where
the information about the problem comes incrementally.

We show detection of equilevel predicates through the following examples:

{\em Example 1: Maximum Cardinality Bipartite Matching}.
Consider the problem of bipartite matching in a graph $(L, R, E)$.
We are interested in finding a subset $V \subseteq L$ of maximum size such that
$V$ can be matched with vertices in $R$.
We
use the Boolean lattice on the vertices $L$. 
A vertex set $V \subseteq L$ satisfies $B$ if
$|V|$ is the maximum number of vertices that can be matched.
It is clear that $B$ is an equilevel predicate.
Suppose that we have a set of vertices $V \subseteq L$ which is not of the maximum size.
Then, we can find a set of helpful set of vertices $W$ such that any 
element of $W$ can be added to $V$.
It can be checked efficiently that $V$ does not satisfy $B$
(for example, by checking if there exists an alternating path from a vertex in $L-V$ to an unmatched vertex in $R$
\cite{CLR}).
The algorithm to find helpful vertices is simple: any vertex in $L-V$ that has an alternating
path to an uncovered vertex in $R$ is a helpful vertex. For efficiency, the algorithm may maintain the set of matched vertices in $R$.
One can use Algorithm \ref{fig:alg-equi} to find the largest set in $L$ with matching in $R$.
The predicate $helpful(i, G, B)$ holds whenever the vertex $i$ can be added to the current set of matched
vertices $G$.
We note here that \cite{fenner2019deterministic} discusses parallel algorithms for perfect matching
in a bipartite graph. Our goal is to view the problem from the perspective of detecting a global condition.
\qedsymbol

{\em Example 2: Minimum Spanning Tree in an Undirected Graph}.
Let $B$ be true on a set of edges of an undirected graph if the edge set forms a spanning tree of the graph.
Given a connected undirected graph with $n$ vertices, it is well-known that
any edge set that is acyclic and has $n-1$ edges satisfies $B$. 
This predicate is equilevel because all the elements that satisfy the predicate $B$ are
at the level $n-1$ in the Boolean lattice of all edges.
Given a set of edges, one can efficiently compute whether the set is acyclic. 
Hence, one can use Algorithm \ref{fig:alg-equi} to find a spanning tree as follows:
The predicate $helpful(i, G, B)$ holds whenever the edge $e_i$ does not form a cycle
with edges in $G$, so at each step Algorithm~\ref{fig:alg-equi} adds any edge which does not make a cycle in the current solution.
\qedsymbol

{\em Example 3: Basis Set of Vectors}.
%As yet another example of equilevel predicates with efficient advancement property, 
Suppose that 
we are given a set of vectors $S$ of the same dimension. Our goal is to compute a basis, i.e. find a subset of the biggest size
containing linearly independent vectors. The predicate $B$ is true on a set $G$
if all the vectors in $G$ are linearly independent, and there does not exist
any vector in $S-G$ that is linearly independent with vectors in $G$.
Here $helpful(i,G,B)$ holds if the vector $v_i$ is independent of all the vectors in $G$.
\qedsymbol

Lattice-linear predicates \cite{DBLP:conf/spaa/Garg20} are associated with the concept of \emph{forbidden} local states.
In particular, for any $G$ which does not satisfy a lattice-linear predicate $B$, there exists at least one local state $s$ such that no global state that satisfies $B$ can have the same local state $s$.
This means that one can advance on all forbidden local states of a lattice-linear predicate unconditionally.
In particular, if there are multiple local states that are forbidden then the global state can be advanced on all of them in parallel as forbidden local states remain forbidden until they are advanced.
In contrast, when it comes to equilevel predicates, advancing on one helpful
local state may make the local state on some other process ``unhelpful.''  For example, in computing a spanning tree we may be able to add
one of edge $a$ or edge $b$ to our current solution, but adding both of them could create a cycle.

An equilevel predicate may not have the helpful property as shown next.
\begin{theorem}
 There exist equilevel predicates that do not have any helpful property unless \P = \NP.
\end{theorem}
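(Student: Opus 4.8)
The plan is to prove the contrapositive: if *every* equilevel predicate had *some* polynomial-time helpful algorithm, then we could solve an NP-complete problem in polynomial time, forcing P = NP. The natural vehicle is the minimum vertex cover problem, which the paper has already established (in the discussion surrounding Fig.~\ref{fig:equilevel}) is modeled as an equilevel predicate $B$ on the Boolean lattice of the vertex set: $B(V')$ holds iff $V'$ is a vertex cover of minimum cardinality. The key observation to exploit is that Theorem~\ref{thm:equi-helpful} turns any helpful property into a polynomial-time detection procedure via Algorithm~\ref{fig:alg-equi}, so possession of a helpful algorithm for this particular $B$ would immediately yield a polynomial-time decision of whether a graph has a vertex cover of the minimum size --- and, by running it, the size itself --- which is NP-hard.

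**First I would** fix the vertex cover predicate $B$ as the witnessing equilevel predicate and argue that $B(\CC)$ always holds (every finite graph has a minimum vertex cover, so there is always a satisfying element in the lattice), which means the $helpful$ guarantee is never vacuous for this instance. **Then I would** suppose for contradiction that $B$ has the helpful property with some polynomial-time algorithm $\mathcal{A}$. Feeding $B$, $\mathcal{A}$, and the Boolean lattice into Algorithm~\ref{fig:alg-equi} and invoking Theorem~\ref{thm:equi-helpful}, we obtain in polynomial time a minimal global state $G^{\ast}$ satisfying $B$, i.e.\ an actual minimum vertex cover. Reading off $|G^{\ast}|$ decides the vertex cover decision problem in polynomial time, so $\P = \NP$. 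Contrapositively, unless $\P = \NP$, this $B$ admits no helpful property, which establishes the theorem since it exhibits a concrete equilevel predicate lacking the property.

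The only subtlety worth checking is the \emph{online} access restriction: Algorithm~\ref{fig:alg-equi} and the definition of the helpful property grant $\mathcal{A}$ access to the poset only up to the current state $G$. For the reduction this is harmless, because the full graph $(V,E)$ is part of the problem input (not hidden structure discovered incrementally), so $\mathcal{A}$ sees the whole graph while restricted only in which lattice elements it inspects; the NP-hardness of vertex cover does not rely on any information being concealed. \textbf{The step I expect to be the main obstacle} is making precise that the helpful property is being \emph{assumed uniformly} for the single fixed predicate $B$ rather than claimed for all predicates at once --- the statement ``there exist equilevel predicates that do not have any helpful property'' is an existential over predicates, so the proof must isolate one bad predicate (vertex cover) and show that a helpful algorithm for \emph{it alone} already collapses the complexity classes. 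One must also confirm that Algorithm~\ref{fig:alg-equi} genuinely returns a \emph{minimum} cover and not merely some satisfying element; this follows because $B$ is equilevel and the algorithm advances only on helpful indices, so by the correctness argument in Theorem~\ref{thm:equi-helpful} the returned state is a minimal --- hence minimum-cardinality --- satisfying element.
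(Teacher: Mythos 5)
Your proposal is correct and takes essentially the same route as the paper: it fixes the minimum vertex cover predicate as the witnessing equilevel predicate and observes that a helpful property for it would, via Theorem~\ref{thm:equi-helpful} and Algorithm~\ref{fig:alg-equi}, yield a polynomial-time algorithm for an \NP-hard problem. The paper's proof is a condensed version of exactly this argument (adding only the remark that level-$k$ conjunctive predicate detection gives a second such example).
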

\begin{proof}
  We have earlier shown that the problem of finding the vertex cover of the least size is
  an equilevel predicate. If there exists any helpful property for this problem, we 
  would have a polynomial time algorithm to find the vertex cover.
  Similarly, detecting a conjunctive predicate at a particular level of the lattice of global states also cannot have the helpful property unless \P = \NP.
\end{proof}

\section{Equilevel Predicates with the Independently Helpful Property}

We now define a stronger version of equilevel predicates in which 
at every step, one evaluates which indices are independently helpful. Two helpful indices are
{\em independently helpful} if 
it is okay to advance on both of them in parallel. 
This definition allows us to design parallel algorithms for equilevel predicates.

Formally,

\begin{definition}[Independently Helpful Property]
 A Boolean predicate $B$ has an independently helpful property with the \NC~algorithm
 $\mathcal{A}$ on a finite distributive lattice $\CC$
 if for all $G \in \CC$,
 \[ \neg B(G) \Rightarrow \mathcal{A}(G)\neq\varnothing \wedge independently\text{-}Helpful(\mathcal{A}(G), G,  B).\]
where 
$independently\text{-}Helpful(J, G, B)$ is defined as
\[ B(\CC) \Rightarrow \exists H > G: \forall i \in J : (H[i] > G[i]) \wedge B(H).\]
such that with a polylogarithmic number of parallel advancements, the predicate is detected.
\end{definition}

Observe that the predicate $independently\text{-}Helpful(J, G, B)$ implies the predicate
$helpful(i, G, B)$ for all $i \in J$. In this definition, we require that
there exists $H$ that is bigger than $G$ in {\em all} of the components $i \in J$. The definition of $helpful(i, G, B)$ required $H$ to be bigger than $G$ only in the
component $i$. Hence, the algorithm can advance in all of the components in $J$ 
in parallel. Secondly, we require that we either reach the top of the lattice or find
a global state that satisfies the predicate in at most polylogarithmic number 
of parallel advancements. 
Note the requirements on the algorithm $\mathcal{A}$: We require it to be of
polylogarithmic depth complexity so that we can detect the predicate in \NC.

Any equilevel predicate with independently helpful property can be detected in \NC~by using the
Algorithm \ref{fig:alg-equi-indep}.

\begin{theorem} \label{thm:equi-independent}
    Let $B$ be any equilevel predicate that satisfies an
    independently helpful property with an \NC~algorithm $\mathcal{A}$.
    Then, $B$ can be detected in \NC.
\end{theorem}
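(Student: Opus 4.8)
The plan is to verify correctness and the depth bound for Algorithm~\ref{fig:alg-equi-indep}, the parallel analog of Algorithm~\ref{fig:alg-equi} that, at each iteration, advances the current global state $G$ on \emph{every} index in $\mathcal{A}(G)$ simultaneously rather than on a single helpful index. Relative to Theorem~\ref{thm:equi-helpful}, two things must be re-established: that this batch advancement never steps past a satisfying global state, and that the procedure has polylogarithmic depth. Together these place detection in \NC.

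First I would set up the same loop invariant as in the sequential case: whenever $B(\CC)$ holds, the current state $G$ satisfies $\exists H \geq G: B(H)$. The base case is immediate since $G$ is initialized to $\bot$. For the inductive step, assume $\neg B(G)$ with $G \neq \top$. By the independently helpful property, $J := \mathcal{A}(G)$ is nonempty and $independently\text{-}Helpful(J, G, B)$ holds, so there is a single witness $H > G$ with $B(H)$ and $H[i] > G[i]$ for every $i \in J$. Letting $G'$ be the state after advancing, so that $G'[i] = G[i]+1$ for $i \in J$ and $G'[i] = G[i]$ otherwise, the inequality $H[i] > G[i]$ gives $H[i] \geq G[i]+1 = G'[i]$ on $J$, while $H \geq G$ gives $H[i] \geq G'[i]$ off $J$; hence $G' \leq H$ and the invariant survives with the same witness. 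Moreover, since $G' \leq H$ forces $|G'| \leq |H|$ and $B$ is equilevel, the batch advance never passes the satisfying level.

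From the invariant, correctness of the decision follows: the algorithm returns \textbf{null} only when $\neg B(G)$ and $G = \top$, which under $B(\CC)$ would force $H = \top$ and hence $B(\top)$, contradicting the loop exit condition; so it never falsely rejects, and when $\neg B(\CC)$ it advances monotonically and can only terminate by reaching $\top$ and rejecting. For the complexity, the defining clause of the independently helpful property guarantees that detection occurs within a polylogarithmic number of parallel advancements. Each iteration evaluates $B(G)$, tests $G = \top$, runs the \NC~algorithm $\mathcal{A}$, and increments the selected chains in parallel, all of which have polylogarithmic depth on polynomially many processors; composing polylogarithmically many polylogarithmic-depth iterations yields total polylogarithmic depth, so $B$ is detected in \NC.

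I expect the depth bound to be essentially inherited from the definition, so the main obstacle is the correctness of the batch advancement, namely establishing $G' \leq H$. The delicate point is that advancing on a whole set $J$ in one step stays below a satisfying state only because $independently\text{-}Helpful$ supplies one common witness dominating $G$ on all of $J$ at once, rather than one witness per index as the merely helpful property would allow; under the weaker property a parallel advance could destroy every satisfying state, exactly as in the spanning-tree cycle example preceding the definition. Confirming that the existential quantifier sits in the right place, a single $H$ for the entire set $J$, is the crux of the argument.
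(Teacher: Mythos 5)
Your proof is correct and follows essentially the same route as the paper's: run Algorithm~\ref{fig:alg-equi-indep} from $\bot$, justify the batch advancement via the single common witness $H$ supplied by $independently\text{-}Helpful(J,G,B)$, and obtain the \NC~bound from the definition's requirement of polylogarithmically many parallel advancements. The paper's own proof is only a two-sentence sketch; your loop-invariant argument establishing $G' \leq H$ merely makes explicit the safety step the paper leaves implicit in the prose following the definition.
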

\begin{proof}
    Algorithm~\ref{fig:alg-equi-indep} starts with the least element in the lattice $\CC$. In each step of the {\em while} loop, it makes one call of independently helpful and advances on
    all indices $i \in J$ with $independently\text{-}Helpful(J,G,B)$. If there can be at most polylogarithmic number of advancements, then
    the algorithm will detect $B$ in \NC.
\end{proof}

{\em Example 1}: For the problem of minimum-spanning tree, any set of edges that do not form a cycle with the currently chosen edges is independently helpful. Observe that there may be multiple sets of independently helpful states. We only require the algorithm to 
return any such set. How do we find an independently helpful set in parallel?
% For finding a spanning tree, one can use the strategy employed in Boruvka's
% algorithm. 
At every stage in the algorithm, there is a set of connected components.
Each connected component chooses a single outgoing edge such that the edges chosen 
do not form any cycle. This set of edges can be added 
in parallel to reach the next stage of the algorithm.
The number of connected components decreases by at least a factor of two after every such step.
Hence, we would need to make at most polylogarithmic advancements.
This leads us to Borůvka's parallel algorithm for the minimum spanning tree \cite{Nesetril.2001}.
\qedsymbol

% add more details?

\begin{algorithm}
\SetAlgoRefName{Equilevel-Independence}
vector {\bf function} GetSatisfying($B$: predicate, $\CC$: Lattice)\\
{\bf var} $\CGG$: array[$1 \ldots m$] of int initially $\forall i: \CGG[i] = 0$;\\
%{\bf var} $\CGG$: vector of int initially $\forall i: \CGG[i] = 0$;\\
 {\bf while} $\neg B(\CGG)${\bf do}\\
 \h {\bf if} ($\CGG$ is the top element of $\CC$)\\
 \h \h then {\bf return} null;\\
 \h {\bf else} \\ 
 \h \h {\bf forall} $i \in J$ where $J = \mathcal{A}(G)$\\
 \h \h \h and $independently\text{-}Helpful(J,G, B)$ 
 {\bf in parallel} do\\
\h \h  \h \h  $\CGG[i] := \CGG[i]+1$;\\
       {\bf endwhile};\\
    {\bf return} $\CGG$; // an optimal solution \\
\caption{A Parallel \NC~Algorithm  to find a state satisfying $B$\label{fig:alg-equi-indep}}
\end{algorithm}

\section{Solitary Predicates}

A {\em solitary} predicate is one which is either false on all the elements of the lattice, or is true
on a single element in the lattice. 
\begin{definition}[Solitary Predicate]
A Boolean predicate $B$ is {\em {solitary}} with respect to a lattice $\CC$
iff
$$\forall G, H \in \CC: B(G) \wedge B(H) \Rightarrow (G=H).$$
\end{definition}
The above definition states that if the predicate is true for two elements $G$ and $H$ in the lattice, then
$G$ must be equal to $H$. This definition also includes the empty predicate that is not true in any element
of the lattice. 

Solitary predicates are closely related to the problem of Unique SAT.
Unique SAT (USAT) \cite{DBLP:conf/stoc/ValiantV85} asks for the satisfiability of a Boolean expression given the promise that it has either a single satisfying assignment or none. 
Let $x_1, x_2, \ldots, x_n$ be $n$ Boolean variables. Let $B$ be a Boolean expression on these variables. The USAT problem requires the algorithm to return $1$, if the Boolean expression 
has a unique satisfying assignment, $0$, if the Boolean expression is not satisfiable
and either $0$ or $1$, if it has multiple satisfying assignments.
Valiant and Vazirani \cite{DBLP:conf/stoc/ValiantV85} have shown the following result.
% Valiant and Vazirani's proof uses a randomized polynomial-time algorithm that 
% reduces a given instance of SAT to an instance of USAT. 

\begin{theorem}
({\bf Valiant-Vazirani}\cite{DBLP:conf/stoc/ValiantV85})  If there exists a randomized polynomial-time algorithm for solving instances of SAT with at most one 
satisfying assignment, then \NP=\RP.
\end{theorem}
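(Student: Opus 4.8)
The plan is to establish both inclusions of $\NP = \RP$. The inclusion $\RP \subseteq \NP$ is immediate, so all the work lies in showing $\NP \subseteq \RP$ under the stated hypothesis. Since SAT is $\NP$-complete and $\RP$ is closed under polynomial-time many-one reductions, it suffices to place SAT itself in $\RP$ using the assumed randomized algorithm $A$ for instances with at most one satisfying assignment. First I would amplify $A$ by independent repetition and majority vote so that, on any formula honoring the promise (zero or one satisfying assignment), its error probability drops below $2^{-n}$.

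The heart of the argument is a randomized reduction from SAT to unique-SAT by random hashing (the Valiant--Vazirani isolation lemma). Given a formula $\phi$ on $n$ variables with satisfying set $S = \{x : \phi(x)\}$, I would use a pairwise-independent hash family, concretely $h_{M,b}(x) = Mx + b$ over $\mathrm{GF}(2)$ with a random $k \times n$ matrix $M$ and random vector $b$, and form $\phi' = \phi \wedge (h_{M,b}(x) = 0)$. The constraints $Mx = b$ are parity constraints, which admit a count-preserving CNF encoding (auxiliary variables are forced by the inputs), so $\phi'$ is a polynomial-size SAT instance whose satisfying assignments correspond exactly to $\{x \in S : h_{M,b}(x) = 0\}$. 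The key isolation claim is that if $2^{k-2} \le |S| \le 2^{k-1}$, then a random $h_{M,b}$ leaves exactly one element of $S$ in its kernel with probability at least $1/8$; I would prove this by inclusion--exclusion, bounding $\Pr[\text{exactly one}] \ge |S|2^{-k}\bigl(1 - (|S|-1)2^{-k}\bigr) \ge \tfrac14 \cdot \tfrac12$, where pairwise independence controls the collision terms. Since $|S| \ge 1$ forces some $k \in \{2,\dots,n+1\}$ into the size window, choosing $k$ uniformly at random succeeds with probability $\Omega(1/n)$.

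With isolation in hand I would assemble the $\RP$ algorithm as search-then-verify. Using self-reducibility---note that fixing one variable in a formula with at most one satisfying assignment yields two subformulas each still within the promise---the amplified decision algorithm $A$ yields a search routine that, on a promise instance, recovers the unique satisfying assignment with overwhelming probability. The full procedure picks random $k$ and $h_{M,b}$, builds $\phi'$, runs the search routine to obtain a candidate $a$, and accepts iff $a$ satisfies $\phi$. If $\phi$ is unsatisfiable then no candidate can satisfy $\phi$, so the algorithm rejects with probability $1$; if $\phi$ is satisfiable then with probability $\Omega(1/n)$ isolation succeeds, the search finds the witness, and verification accepts. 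Repeating $\Theta(n)$ times independently boosts the acceptance probability past $1/2$, giving genuine one-sided error and hence $\mathrm{SAT} \in \RP$.

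The step I expect to be the main obstacle is the isolation lemma itself: obtaining a constant lower bound on the probability of isolating a single satisfying assignment requires the second-moment/inclusion--exclusion estimate together with careful use of pairwise independence, and this is exactly where the size window on $k$ and the choice of hash family matter. A secondary subtlety, easy to overlook, is securing true $\RP$ (one-sided) error rather than mere two-sided error: this is precisely what the verify step buys us, since we accept only when holding an explicit satisfying assignment, so unsatisfiable instances are rejected with certainty.
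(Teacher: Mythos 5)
This theorem is not proved in the paper at all; it is imported verbatim as a cited result of Valiant and Vazirani, so there is no in-paper argument to compare against. Your reconstruction is the standard isolation-lemma proof from the cited source and is correct in its essentials: the pairwise-independent hashing with the window $2^{k-2}\le |S|\le 2^{k-1}$ yielding isolation probability at least $1/8$, the random choice of $k$ costing a factor $\Omega(1/n)$, the count-preserving CNF encoding of the parity constraints, search-via-self-reducibility (noting correctly that fixing a variable preserves the at-most-one promise), and the final verify step that converts everything into genuine one-sided error. Nothing further is needed; the paper simply uses the statement as a black box to derive hardness of solitary predicate detection.
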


As a direct application, we get the following result by using the
construction in \cite{DBLP:journals/ipl/KashyapG05} as shown below.
\begin{theorem}
Given any finite distributive lattice $\CC$ generated from a poset $P$,
and a solitary predicate $B$, there does not exist a randomized polynomial time algorithm
to detect $B$ unless \NP=\RP.
\end{theorem}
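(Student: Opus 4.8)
The plan is to reduce Unique SAT to solitary predicate detection and then invoke the Valiant--Vazirani theorem stated above. Given a Boolean formula $\phi$ on variables $x_1,\ldots,x_n$ carrying the USAT promise that $\phi$ has at most one satisfying assignment, I would first build a distributive lattice whose elements are in bijection with the truth assignments. The natural choice is the Boolean lattice on $n$ elements, which by Birkhoff's theorem is generated by the poset $P$ consisting of $n$ singleton chains (an antichain of size $n$); each element $G \in \CC$ is an ideal of $P$, i.e.\ a subset $S \subseteq \{x_1,\ldots,x_n\}$, which I read as the assignment setting exactly the variables in $S$ to true. This is precisely the encoding used in the construction of Kashyap and Garg, and it keeps the size of $P$ linear in $n$.

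Next I would define the predicate $B(G)$ to hold exactly when the assignment encoded by $G$ satisfies $\phi$. Evaluating $B$ on any single element is just checking that an assignment satisfies $\phi$, which takes polynomial time. Under the USAT promise there is at most one satisfying assignment, so $B$ is true on at most one element of $\CC$; hence $B$ is a solitary predicate in the sense of the definition above. Detecting $B$---deciding whether some $G \in \CC$ satisfies it---is therefore the same as deciding whether $\phi$ is satisfiable, subject to the uniqueness promise.

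The conclusion then follows by contraposition. Suppose, for contradiction, that there were a randomized polynomial time algorithm detecting every solitary predicate over a lattice generated by a poset. Feeding it the instance $(P, B)$ constructed above yields a randomized polynomial time decision for the USAT instance $\phi$: the poset has polynomial size, $B$ is polynomial-time computable, and $B$ is solitary by the promise. By the Valiant--Vazirani theorem, any such randomized polynomial time procedure for SAT with at most one satisfying assignment forces $\NP = \RP$, giving the claimed conditional lower bound.

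I expect the main point requiring care to be the faithful alignment of the two promise structures rather than any hard combinatorics: I must argue that the ``at most one satisfying assignment'' guarantee of the USAT instance maps exactly onto the ``true on at most one element'' requirement built into the definition of a solitary predicate, so that the hypothetical detection algorithm is always invoked on a legitimate (solitary) instance. I would lean on the Kashyap--Garg construction to handle the bookkeeping of this bijection and to confirm that the reduction is computable in polynomial time, after which the implication is immediate.
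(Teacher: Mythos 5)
Your reduction is exactly the one the paper uses: the poset of $n$ singleton chains (one event per process hosting $x_i$) is the same as the paper's construction from \cite{DBLP:journals/ipl/KashyapG05}, the predicate is ``the encoded assignment satisfies $\phi$,'' and the conclusion follows from Valiant--Vazirani in the same way. The proposal is correct and matches the paper's proof.
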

\begin{proof}
Suppose, if possible, there exists a randomized polynomial time algorithm $\mathcal{A}$ to detect a solitary predicate.
Given any instance of USAT on $n$ variables $x_1, x_2, \ldots, x_n$, we construct a poset with $n$ events on $n$ processes, $P_1, P_2, \ldots, P_n$ with one event per process. Process $P_i$
hosts the variable $x_i$ which is initially set to false. When the event is executed, the variable
is set to true. The distributive lattice generated from this poset has $2^n$ elements --- one for
every truth assignment. If the given instance of USAT has zero satisfying assignments, then
$\mathcal{A}$ is guaranteed to return $false$. If the given instance of USAT has exactly one satisfying 
assignment, then $\mathcal{A}$ is guaranteed to return $true$ with a satisfying assignment. Otherwise,
we do not care what $\mathcal{A}$ returns.
Therefore, any randomized polynomial time algorithm to detect a solitary predicate can be used to solve USAT
in polynomial time.
\end{proof}

\section{Solitary Predicates with Efficient Advancement Property}

Although detecting a general solitary predicate is hard, we can detect the predicate $B$,
whenever it satisfies the efficient advancement property \cite{DBLP:conf/spaa/Garg20}. The efficient advancement
property requires an efficient algorithm $\mathcal{A}$ such that whenever the predicate is false
on an element $G \in  L$, the algorithm returns an index $i$ such that the predicate
is false for all $H \geq G$ such that $H[i]$ equals $G[i]$.

\begin{definition}[Efficient Advancement Property \cite{DBLP:conf/spaa/Garg20}]
A Boolean predicate $B$ has the efficient advancement property with the polytime algorithm $\mathcal{A}$ with respect to a finite distributive lattice $\CC$ generated from a poset $P$
if 
%$$\forall G \in \CC: \neg B(G) \Rightarrow \forall H \geq G: H[\mathcal{A}(G)] \neq G[\mathcal{A}(G)] \vee \neg B(H).$$
%$$\forall G \in \CC: \neg B(G) \Rightarrow \mathcal{A}(G)\neq\varnothing \wedge\forall i \in \mathcal{A}(G): forbidden(i, G, B)$$
$$\forall G \in \CC: \neg B(G) \Rightarrow \exists i \in \mathcal{A}(G): forbidden(i, G, B)$$
where $forbidden(i,G,B)$ is defined as
\[ \forall H \geq G: (H[i] = G[i]) \Rightarrow \neg B(H).\]
\end{definition}

The above definition states that whenever $B$ is false in $G$, the algorithm $\mathcal{A}(G)$ returns
a set of indices such that any global state $H$ greater than $G$ that matches $G$ on any returned index also has $B$ false. 
We will restrict ourselves to online algorithms, i.e., the algorithm
$\mathcal{A}$ will only have access to those global states preceding $G$.

% It is worthwhile to compare the definition of the efficient advancement property with the helpful
% property. If an index $i$ satisfies $forbidden(i,G,B)$, then it 

Any problem that can be modeled using a solitary predicate with the efficient advancement property
can be solved in polynomial time. We restate the result from \cite{DBLP:conf/spaa/Garg20}
in terms of solitary predicates.
\begin{theorem}\label{thm:singalg1} \cite{DBLP:conf/spaa/Garg20}
Let $B$ be a solitary predicate with the efficient advancement property on a distributive lattice $\CC$.
Then $B$ can be detected in polynomial time.
\end{theorem}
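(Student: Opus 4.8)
The plan is to reuse the sequential advancement strategy underlying Algorithm~\ref{fig:alg-equi}, but now advancing on a \emph{forbidden} index rather than a helpful one. Concretely, initialize $G$ to the bottom element $\bot$ of $\CC$ and repeat the following: if $B(G)$ holds, return $G$; otherwise, if $G = \top$ return ``no''; otherwise invoke $\mathcal{A}(G)$, pick any returned index $i$ with $forbidden(i, G, B)$, and advance $G$ by setting $G[i] := G[i] + 1$. The loop halts either when a satisfying state is found or when the top of the lattice is reached. This is the natural analogue of the helpful-property detector, specialized to the single-index advancement guaranteed by the efficient advancement property.

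The complexity bound is immediate. Each iteration strictly increases $|G|$ by one, and $|G|$ is bounded by the height of $\CC$, which is $O(nm)$ for $n$ chains of length at most $m$. Since each iteration evaluates $B$ and runs $\mathcal{A}$, both of which are polynomial by hypothesis, the total running time is polynomial. Because $\mathcal{A}$ is online, all of its queries concern only states below the current $G$, so the whole procedure remains online as required.

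The heart of the argument is correctness, which I would establish through the invariant that $G \le H^*$ at every iteration, where $H^*$ is the unique element satisfying $B$ (if one exists; solitariness guarantees uniqueness). The invariant holds initially since $G = \bot$. For the inductive step, suppose $G \le H^*$ and $\neg B(G)$, so that $G < H^*$. Let $i$ be the forbidden index returned by $\mathcal{A}(G)$. By the definition of $forbidden(i,G,B)$, every $H \ge G$ with $H[i] = G[i]$ falsifies $B$; since $H^* \ge G$ and $B(H^*)$ holds, we must have $H^*[i] \ne G[i]$, and hence $H^*[i] > G[i]$. Advancing on $i$ therefore keeps the advanced state below $H^*$ in every component, preserving the invariant.

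Finally, because each step raises $|G|$ by exactly one while maintaining $G \le H^*$, after at most $|H^*|$ steps we reach a state with $|G| = |H^*|$; combined with $G \le H^*$ in a graded lattice this forces $G = H^*$, so $B(G)$ is detected and returned. If no satisfying element exists, the advancement never terminates early and the loop simply climbs to $\top$ and reports ``no.'' I expect the forbidden-index step of the invariant --- ruling out $H^*[i] = G[i]$ and thereby showing advancement never overshoots the unique satisfying state --- to be the crux of the proof; everything else is bookkeeping inherited from the polynomial iteration bound.
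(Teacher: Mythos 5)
Your argument is correct and is essentially the standard lattice-linear-predicate proof that the paper does not reproduce but simply cites from \cite{DBLP:conf/spaa/Garg20}: the invariant $G \le H^*$, preserved because $forbidden(i,G,B)$ together with $B(H^*)$ and $H^* \ge G$ forces $H^*[i] > G[i]$, is exactly the crux of that argument, and your iteration count of at most the lattice height times a polynomial per step matches. The one small omission is that your loop only reports ``no'' when $G = \top$, whereas a forbidden index $i$ may already sit at the top of its chain while $G \ne \top$, leaving no legal advancement; in that case the definition of $forbidden$ (every $H \ge G$ has $H[i] = G[i]$ and hence falsifies $B$) combined with your invariant lets you return ``no'' immediately, which is precisely the ``cannot be advanced'' branch made explicit in the paper's Algorithm~\ref{fig:alg-solitary}.
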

% \begin{proof}
% Let $P$ be the poset that generates $\CC$. We start with the ideal $G$ equal to the empty set.
% If the predicate $B$ is false, then by the efficient advancement property, we get an index
% $i$ such that, unless $G$ advances on $P_i$, $B$ can never be true.
% If $G$ cannot be advanced on $i$, we return false. Otherwise, we advance on $i$ and
% repeat the procedure. The total number of times we can advance is at most the
% size of the poset $P$.
% \end{proof}

% Observe that the method outlined in Theorem \ref{thm:singalg1} is {\em sequential}
% and online.
% However, i
It is easy to develop a parallel algorithm to detect a solitary predicate
with the efficient advancement property.
% Following \cite{DBLP:journals/dc/ChaseG98}, we call an index $i$ {\em forbidden} in $G$ if the efficient
% advancement property returns $i$. Observe that multiple indices may be forbidden in $G$.

\begin{algorithm}
\SetAlgoRefName{Solitary}
{\bf var} $\CGG$: element of $\CC$ initially $\bot$;\\
{\bf while} $\neg B(\CGG)$ do\\
\h  {\bf forall} $i$: forbidden($i$, $\CGG$) do {\bf in parallel}\\
\h \h {\bf if} $\CGG$ cannot be advanced on $i$ then return false;\\
\h \h {\bf else} advance $\CGG$ on $i$;\\
\caption{An Online Parallel Algorithm for detecting Solitary predicates with the efficient advancement property. \label{fig:alg-solitary}}
\end{algorithm}

Now consider lattice-linear predicates that are true on multiple elements in the lattice.
%The class of efficiently detectable solitary predicates include the class of
%predicates that hold for the least element that satisfies a LLP predicate.
Let $B$ be any lattice-linear predicate. We know that $B$ is closed under the meet operation of the lattice \cite{DBLP:conf/spaa/Garg20}.
Suppose that $B$ becomes true in the lattice and the least element that satisfies $B$ is $G_{min}$.
We derive another predicate $B'$ from $B$ that holds only for the element $G_{min}$ whenever it exists.
Thus, $B'$ holds for $G$ iff $B(G)$ and for all elements $H$, if $B(H)$, then $G \leq H$. Therefore,
$B'(G)$ is false for all other elements besides $G_{min}$.

\begin{theorem}
Let $B$ be any lattice-linear predicate. Let $B'(G)$ be defined as follows:
\[ B'(G) \equiv B(G) \wedge  (\forall H: B(H) \Rightarrow G \leq H). \]
Then, $B'$ is a solitary predicate.
\end{theorem}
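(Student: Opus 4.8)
The plan is to verify the defining implication of a solitary predicate directly from the shape of $B'$, namely to show that $B'(G) \wedge B'(H)$ forces $G = H$ for arbitrary $G, H \in \CC$. The key realization is that the second conjunct of $B'$ asserts that its witness is a \emph{lower bound} for the entire set of elements satisfying $B$; and any two elements that both lie in that set and are both lower bounds of it must coincide.

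Concretely, I would begin by assuming $B'(G)$ and $B'(H)$ both hold and unfolding the definition of $B'$. From $B'(G)$ I extract two facts: that $B(G)$ holds, and that $\forall K: B(K) \Rightarrow G \leq K$. Symmetrically, from $B'(H)$ I extract $B(H)$ together with $\forall K: B(K) \Rightarrow H \leq K$. (I deliberately rename the bound variable to $K$ to avoid the clash with the free $G, H$ of the solitary condition.)

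Next I would instantiate these universally quantified clauses against one another. Since $B(H)$ holds, the lower-bound clause of $B'(G)$ applied with $K := H$ yields $G \leq H$. Since $B(G)$ holds, the lower-bound clause of $B'(H)$ applied with $K := G$ yields $H \leq G$. Antisymmetry of the lattice order on $\CC$ then gives $G = H$, which is exactly the solitary condition; the case where $B'$ holds for no element is covered vacuously, since then the premise $B'(G) \wedge B'(H)$ is never met.

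I expect no genuine obstacle here: the argument is a one-line consequence of antisymmetry and, notably, does not invoke lattice-linearity of $B$ at all. The hypothesis that $B$ is lattice-linear instead plays a different role in the surrounding development — via closure under meet it guarantees that a least element $G_{min}$ satisfying $B$ actually exists, so that $B'$ is \emph{non-empty} exactly when $B$ is. That existence fact is worth remarking on separately, but it is orthogonal to the \emph{solitary} conclusion, which holds for the derived predicate $B'$ whether or not $B$ is lattice-linear.
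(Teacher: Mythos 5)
Your proof is correct, but it takes a different route from the paper's. You verify the solitary condition directly: instantiating the two lower-bound clauses against each other gives $G \leq H$ and $H \leq G$, and antisymmetry finishes it. The paper instead splits into cases: if $B$ is nowhere true, $B'$ is vacuously solitary; otherwise it invokes lattice-linearity (closure under meets) to produce the least element $G_{min}$ satisfying $B$, and then observes that $B'(G)$ holds iff $G = G_{min}$. Your observation that lattice-linearity is not needed for the solitary conclusion is exactly right and is the main difference --- your argument proves the stronger statement that $B'$ is solitary for an \emph{arbitrary} predicate $B$. What the paper's route buys in exchange is the non-vacuity guarantee: by exhibiting $G_{min}$ it shows that $B'$ is actually satisfied somewhere whenever $B$ is, which is the fact that makes the construction useful for detection (a predicate that is solitary only because it is everywhere false would be uninteresting). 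You flag this existence point as a separate remark, which is the right way to organize it; the paper folds the two facts into one argument at the cost of obscuring that the hypothesis is only needed for one of them.
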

\begin{proof}
First, consider the case when $B$ does not hold for any element in the lattice. This implies that $B'$ also does not 
hold for any element in the lattice and is, therefore, solitary.
Now, suppose that $B$ is true in the lattice. Since $B$ is an LLP predicate, it is
closed under meets. Therefore, there is the smallest element, $G_{min}$,
in the lattice such that $B(G_{min})$ holds. 
It is readily verified that $B'(G)$ holds iff $G=G_{min}$.
\end{proof}

Thus, all the instances of lattice-linear predicates provide us examples of solitary predicates when we focus on the least element.

{\em Example 1: Man-Optimal Stable Marriage.}
In this problem, we are given as input $n$ men and $n$ women. We are also given 
 a list of men preferences as $mpref$ where $mpref[i][k]$ denotes $k^{th}$ top choice of man $i$.
 The women preferences are more convenient to express as a $rank$ array where $rank[i][j]$ is the rank
 of man $j$ by woman $i$. A matching between man and woman is stable if there is no {\em blocking pair}, 
 i.e., a pair of woman and man such that they are not matched and prefer each other to their spouses.
 The underlying lattice for this example is
 the set of all $n$ dimensional vectors of $1..n$.
 We let $G[i]$ be the choice number that man $i$ has proposed to. Initially, $G[i]$ is $1$ for all men.
% In the lattice of all assignments of women to men
% based on men preference list $mpref$, the assignment that corresponds to a stable marriage is a lattice-linear predicate.

If we now focus on the man-optimal stable marriage, then the predicate ``the assignment is a man-optimal stable marriage''
is a solitary predicate.
%As an example, consider the problem of finding the man-optimal stable marriage, a solitary predicate.
%For this example, the lattice is all the assignments to men.
The predicate $B_{stableMarriage}$ for the stable marriage is given by,
\[B_{stableMarriage} \equiv \forall j: \neg forbidden(j, G)\]
where
 $forbidden(j, G)$ is defined as
 $$(\exists i: \exists k \leq G[i]: (z = mpref[i][k])  \wedge (rank[z][i] < rank[z][j])),$$
 with $z = mpref[j][G[j]]$.

% %$B_{sm}(G) \equiv$ ($G$ assigns different women to all men) $\wedge$ (there is no blocking pair).
% $B_{marriage}(G) \equiv  \forall j: (\forall i: \forall k \leq G[i]: mpref[j][G[j]] = mpref[i][k] \Rightarrow wrank[mpref[j][G[j]][j] < wrank[mpref[j][G[j]][j]$
% %(\exists i: \exists k \leq G[i]: (z = mpref[i][k]) \wedge (rank[z][i] < rank[z][j]))$
% % $B_{marriage}(G) \equiv  \forall j: \neg \forbidden(G, j)$\\
% %\> $z = mpref[j][G[j]]$; //current woman assigned to man $j$\\
% %\> $\forbidden(G, j, B_{marriage}) \equiv  (\exists i: \exists k \leq G[i]: (z = mpref[i][k]) \wedge (rank[z][i] < rank[z][j]))$\\
The predicate says that a marriage given by the vector $G$ is stable if none of
its index $j$ is forbidden. The index $j$ is forbidden if the woman $z$ corresponding 
to man $j$'s preference in $G[j]$ is also equal to the preference of man $i$ in $G$, or
a global state before $G$, and the woman $z$ prefers $i$ to $j$.

We now define the predicate for the man-optimal stable marriage,  $B_{mosm}(G)$ as
\[ B_{stableMarriage}(G) \wedge  (\forall H: B_{stableMarriage}(H) \Rightarrow G \leq H). \]

By definition, it is clear that $B_{mosm}$ is a solitary predicate. Given a lattice $\CC$,
we can search for the element satisfying $B_{mosm}$ by searching for $B_{stableMarriage}$.
We can use Algorithm \ref{fig:alg-solitary} for this purpose.
\qedsymbol

{\em Example 2: Housing Allocation Problem.}
As another example of a solitary predicate, consider the housing allocation problem with $n$ agents
and $n$ houses proposed by Shapley and Scarf \cite{shapley1974cores}.
The housing market is a matching problem with one-sided preferences.
 Each agent $a_i$ initially owns a house $h_i$ for $i \in \{1,n\}$ and
has a completely ranked list of houses.
The list of preferences of the agents is given by $pref[i][k]$ which specifies the $k^{th}$ preference of the
agent $i$. 
%Thus, $pref[i][1] = j$ means that $a_i$ prefers $h_j$ as his top choice.
The goal is to come up with an optimal house allocation such that
each agent has a house and no subset of agents can improve the satisfaction of agents in this
subset by exchanging houses within the subset. 

It is well-known that the housing market has a unique solution, called the
{\em core} of the game.
We model the housing market problem as that of predicate detection in a computation \cite{DBLP:conf/sss/Garg21}.
Each agent proposes to houses in the decreasing order of preferences.
These proposals are considered as events executed by $n$ processes representing the agents.
Thus, we have $n$ events per process. 
% Each event is labeled as $(i,h,k)$,
% which corresponds to the agent $i$ proposing to the house $h$ as his choice number $k$.
%In the initial state each agent has null house and makes proposal to his first choice.
The global state corresponds to the number of proposals made by each of the agents.
Let $G[i]$ be the number of proposals made by the agent $i$. We assume that
in the initial state, every agent has made his first proposal. Thus, the initial global state
$G = [1,1,..,1]$.
We extend the notation of indexing to subsets $J \subseteq [n]$ such that
$G[J]$ corresponds to the subvector given by indices in $J$.
A global state $G$ satisfies $matching$ if every agent proposes a different house.
% , i.e.,
% \[ matching(G) \equiv \forall i,j: i \neq j: wish(G, i) \neq wish(G, j). \]
We generalize $matching$ to refer to a subset of agents rather than the entire
set. Let $J \subseteq [n]$. Then,
$submatching(G, J)$  iff the houses proposed by agents in $J$
is a permutation of indices in $J$.
Intuitively, if $submatching(G, J)$ holds, then all agents in $J$ can exchange houses within the subset $J$.
For all $G$, there always exists a nonempty $J$ such that $submatching(G, J)$.
Let $B_{housing}(G)$ be defined as $G$ is a matching and 
\[ (\forall F <G: \forall J \subseteq [n]: submatching(F, J) \Rightarrow F[J] = G[J]). \]

It is easy to show that $B_{housing}(G)$ is a solitary predicate with an efficient advancement property.
Hence, the housing allocation problem also
can be modeled and solved using the solitary predicates with the efficient advancement
property. An agent $i$ is forbidden in the global state $G$ if the agent wishes a house that is part of the
submatching in $G$.
\qedsymbol

We briefly discuss detection of predicates when the search starts from the top of the lattice in addition to the bottom of the lattice. 
Many predicates, such as $B_{stableMarriage}$ and conjunctive predicates, satisfy not only the efficient advancement property but also its dual. Equivalently, the set of elements satisfying these predicates are closed not only
for the meet operation but also for the join operation. If we are okay with returning either
of the elements as our final answer, then searching for any of the element in parallel can speed
up the algorithm by a factor of the height of the lattice.
The dual of the efficient advancement property can formally be defined as \cite{MG:dcs01slice} follows.

\begin{definition}[dual of Efficient Advancement Property]
A Boolean predicate $B$ has the dual of efficient advancement property with the polytime algorithm $\mathcal{A}$ with respect to a finite distributive lattice $\CC$ generated from a poset $P$
if 
%$$\forall G \in \CC: \neg B(G) \Rightarrow \forall H \leq G: H[\mathcal{A}(G)] \neq G[\mathcal{A}(G)] \vee \neg B(H).$$
$$\forall G \in \CC: \neg B(G) \Rightarrow \exists i \in \mathcal{A}(G): dual\text{-}forbidden(i, G, B)$$
where $dual\text{-}forbidden(i,G,B)$ is defined as
 \[ \forall H \leq G: (H[i] = G[i]) \Rightarrow \neg B(H).\]
\end{definition}

The above definition states that whenever $B$ is false in $G$, the algorithm $\mathcal{A}(G)$ returns
an index such that any global state $H$ less than $G$ that matches $G$ on that index also has $B$ false.
When the efficient advancement as well as its dual are true, one can search for the satisfying element starting from both the bottom and the top of the lattice as shown in Algorithm \ref{fig:alg-solitary2}.

\begin{algorithm}
\SetAlgoRefName{Solitary2}
{\bf var} $\CGG$: element of $\CC$ initially $\bot$ (the bottom element of $\CC$);\\
\h $Z$: element of $\CC$ initially $\top$ (the top element of $\CC$);\\
\h {\bf while} $\neg B(\CGG)$  and $\neg B(Z)$ do\\
\h \h  {\bf forall} $i$: forbidden($i$, $\CGG$) do {\bf in parallel}\\
\h \h \h if $\CGG$ cannot be advanced on $i$ then return false;\\
\h \h \h else advance $\CGG$ on $i$;\\
\h \h  {\bf forall} $j$: dual-forbidden($j$, $Z$) do {\bf in parallel}\\
\h \h \h if $Z$ cannot be retreated on $j$ then return false;\\
\h \h \h else retreat $Z$ on $j$;\\
\h {\bf endwhile};\\
\h if $B(\CGG)$ then {\bf return} $\CGG$ else return $Z$; // an optimal solution \\
\caption{A Parallel Algorithm for detecting Predicates with efficient advancement property and its dual\label{fig:alg-solitary2}.}
\end{algorithm}

Applying the idea to the stable marriage problem, one can search for the stable marriage starting from
the top choices for all men (the $\bot$ element) of the lattice in parallel with the bottom choices for all men (the $\top$)
element of the lattice. This algorithm will traverse the distance in a lattice given by the minimum of the distance
of a stable marriage from the top or the bottom. Hence, depending upon the distance
of man-optimal and man-pessimal stable marriage, it will return the stable marriage that is closer to the
bottom or the top of the lattice.

Similarly, consider the problem of detecting conjunctive predicates
which are also closed under the join operation of the lattice. By using the Algorithm \ref{fig:alg-solitary2},
we again get an algorithm to detect a conjunctive predicate that will traverse the lattice given by the 
minimum of the distance of a satisfying global state from the top or the bottom of the lattice.
In the above analysis, we are ignoring the factor of $2$ penalty that we incur because
we run the algorithm both from the bottom and the top of the lattice.

\section{Solitary Predicates with \NC~Advancement Properties}

We now define a special case of the efficient advancement property.
A predicate $B$ has {\em the \NC~advancement} property,
if (1) there exists an \NC~algorithm to detect whether an index is forbidden and
(2) starting from the initial state and always advancing all the forbidden indices
the algorithm either reaches a satisfying state or the top element of the lattice in 
the polylogarithmic number of steps in the size of the input.
Clearly, any predicate that has the efficient \NC~advancement property
can be detected in parallel in \NC~time with the Algorithm~\ref{fig:alg-solitary}.

We now give several examples guaranteeing the \NC~advancement property.
We say that a predicate $B$
has an {\em antimonotone advancement} property if once an index $i$ is not forbidden in $G$, it
stays not forbidden for all $H \geq G$ such that $H[i]$ equals $G[i]$.
Formally,
\begin{definition}[Antimonotone Advancement Property]
A Boolean predicate $B$ with the advancement property (w.r.t. a poset $P$) is
{\em antimonotone}
if
\begin{equation*}
    \begin{split}
        \forall G \in \CC: &\neg forbidden(i, G, B) \Rightarrow \\
        &\forall H \geq G \wedge (H[i] = G[i]): \neg forbidden(i, H, B).
    \end{split}
\end{equation*}
\end{definition}

Consider the problem of the minimum spanning tree in an undirected graph when all edges have {\em unique weights}. If the graph is connected, then there is a unique minimum spanning tree.
To find this spanning tree, we consider the Boolean lattice formed from all edges. We assume that edges are given to us in the increasing order. 
We define the predicate $B$ on a subset of edges $G$ as true whenever $G$ 
forms the minimum spanning tree. 
We use the binary representation of $G$: 
The variable $G[i]$ equals $1$ iff the $i^{th}$
edge is part of the unique minimum spanning tree.
The predicate $B$ is solitary because either there is a unique
minimum spanning tree or no spanning tree in such a weighted undirected graph.
Furthermore, the predicate $B$ satisfies {\em antimonotone advancement} property. 
If $i^{th}$ edge is not part of the minimum spanning tree, then $G[i]$ is not forbidden and and it will stay not forbidden
even when other edges are included as part of $G$.

We now show that any solitary predicate with Antimonotone advancement property
on a poset with height polylogarithmic in $n$ can be detected in \NC.
\begin{theorem}\label{thm:antimono}
Let $P$ be any poset and $B$ be an antimonotone advancement predicate that can be checked in
\NC.
%(defined on the lattice $\CC$ generated by the poset $P$).
If the height of the poset is $O(\log^k(n))$, for any constant $k$, then
the Algorithm \ref{fig:alg-solitary} is in \NC.
\end{theorem}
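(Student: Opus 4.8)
\section*{Proof plan}

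The plan is to show that the antimonotone property, together with the height bound, forces Algorithm~\ref{fig:alg-solitary} to terminate within a polylogarithmic number of iterations of its \emph{while} loop; since each iteration is itself in \NC, composing polylogarithmically many \NC~phases stays in \NC. The heart of the argument is a \emph{monotonicity of settling}: I would first prove that for each index $i$, the set of loop iterations in which $i$ is forbidden (hence advanced) is an initial segment $\{1,2,\ldots,t_i\}$. Indeed, if $i$ is not forbidden at the state $G_s$ entering iteration $s$, then the algorithm does not advance $i$, so $G_{s+1}[i]=G_s[i]$ while $G_{s+1}\ge G_s$; applying the antimonotone advancement property to $G_s$ yields $\neg forbidden(i,G_{s+1},B)$, and by induction $i$ remains not forbidden in every later state. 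Thus once an index stops being advanced it never resumes.

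Second, I would bound $t_i$. Each iteration in $\{1,\ldots,t_i\}$ increments $G[i]$ by one, and $G[i]$ ranges only over $0,\ldots,|P_i|$, so $t_i\le |P_i|$. Since each $P_i$ is itself a chain of $P$, its length is at most the height of $P$, giving $t_i\le O(\log^k n)$ for every $i$. Set $T=\max_i t_i \le O(\log^k n)$. Third, I would argue the loop runs at most $T$ times. In iteration $s$ the forbidden indices are exactly $\{i : t_i\ge s\}$, so as long as $s\le T$ at least one index is advanced. After iteration $T$ every index has $t_i\le T$ and is therefore not forbidden at the current state $G_{T+1}$ by the initial-segment property. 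The efficient advancement property, which antimonotone advancement refines, then supplies the contrapositive ``no forbidden index $\Rightarrow B$ holds,'' so $B(G_{T+1})$ is true and the loop exits; the ``cannot advance'' branch can only terminate the run sooner. Hence the iteration count is $O(\log^k n)$.

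Finally, I would account for the cost per iteration: evaluating $B(\CGG)$ is in \NC~by hypothesis, detecting the forbidden indices is in \NC~(part of the antimonotone advancement assumption), and all forbidden advancements are performed in parallel in one parallel step. A sequential composition of $O(\log^k n)$ such phases has polylogarithmic depth and polynomial total work, so Algorithm~\ref{fig:alg-solitary} lies in \NC. The main obstacle is the third step. The naive bound on the total number of advancements only gives $\sum_i t_i$, which is merely polynomial and says nothing about depth; the real content is that the initial-segment structure makes all indices settle in \emph{lockstep}, so the number of rounds equals the maximum settling time $\max_i t_i$ rather than their sum. It is precisely the antimonotone property that rules out an index toggling back to forbidden and thereby stretching the run, and this is the point that must be nailed down carefully.
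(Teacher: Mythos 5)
Your proposal is correct and follows essentially the same route as the paper's proof: antimonotonicity guarantees that an index which ceases to be forbidden never becomes forbidden again, so the number of parallel rounds is bounded by the maximum number of advancements on any single chain, i.e.\ the poset height $O(\log^k n)$, and each round is an \NC\ computation. Your write-up merely makes explicit (via the initial-segment lemma and the distinction between $\max_i t_i$ and $\sum_i t_i$) what the paper's terser argument leaves implicit.
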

\begin{proof}
 The {\em forall} statement in the algorithm \ref{fig:alg-solitary} can run at most $O(\log^k(n))$ time because if a process
 is forbidden then it must advance. All processes can advance at most $O(\log^k(n))$ times because the height 
 of the poset is $O(\log^k(n))$. We are exploiting the fact that all processes which can advance do so in parallel.
 If a process is not forbidden, then it stays not forbidden due to the antimonotonicity property.
 Since checking for the property is in \NC, the entire algorithm is in \NC.
\end{proof}

When we apply this algorithm to the minimum spanning tree problem, we get 
the \NC~algorithm~\ref{fig:alg-kruskal-llp}. The algorithm assumes that edges of the graph are
presented in the sorted order similar to Kruskal's algorithm \cite{Kruskal.1956}
\begin{algorithm}
\SetAlgoRefName{Parallel-MST}
{\bf var} $\CGG$: array[$1$..$m$] of $\{0,1\}$ initially  $\forall j: \CGG[j] := 0$;\\
// Edges $\CGG$ are assumed to be in increasing order of weights\\
%{\bf init} $\forall j: \CGG[j] := 0;$\\
{\bf forall} edges $e_i=(v_j,v_k)$: do {\bf in parallel}\\
  \h {\bf if} there is no path from $v_j$ to $v_k$ with edges $1..j-1$ then\\
\h \h  $\CGG[j] := 1$;
 \caption{Finding the minimum spanning tree in a graph in NC \label{fig:alg-kruskal-llp}.}
\end{algorithm}
In this example, the predicate $B$ on a set of edges $G$ is defined to be
``$G$ forms the minimum spanning tree in the graph.'' Since all edge weights are unique,
$B$ is a solitary predicate. In the Boolean lattice of all edges,
an edge from $v_j$ to $v_k$ is forbidden if there is no path from $v_j$ to $v_k$ using 
edges with weight lower than the weight of the edge $(v_j, v_k)$. 
Such an edge is always included as part of the minimum spanning tree.
Furthermore, the predicate satisfies antimonotone advancement property.
If an edge $(v_j, v_k)$ is not forbidden, then it continues to stay not forbidden.
Finally, the poset corresponding to a Boolean lattice has $O(1)$ height, and
therefore by Theorem \ref{thm:antimono}, we have an \NC~algorithm.
% Since the predicate ``there exists no path from $v_j$ to $v_k$'' can be checked in
% NC time, the entire algorithm is in NC.
Many \NC~algorithms are already known for the minimum
spanning tree problem. Our goal was to show how the notion of solitary predicates in a lattice with
$O(1)$ height and {\em antimonotone} advancement leads to an \NC~algorithm.

We now discuss another property of predicate that allows us to detect it in \NC: {\em efficient
rejection}. A solitary predicate that has the efficient rejection property
can be detected in \NC~even when the height of the poset is not $O(\log^k(n))$.
We show that the problem of finding the
least global state that satisfies the conjunctive predicate and the problem of 
reachability in a directed graph satisfy the efficient rejection property.

As a concrete example, consider {\em conjunctive predicates}. We are searching for the least global state that satisfies 
$$B_{conj} = l_1 \wedge l_2 \wedge \ldots l_n.$$
Since we are only interested in the least global state, we can view it as a solitary predicate
with $B_{conj}$ appropriately refined as
$$B_{fconj}(G) = B_{conj}(G) \wedge \forall H: B_{conj}(H) \Rightarrow (G \leq H).$$

We start with the notion of a rejection relation. A state $s$ rejects a state $t$ if whenever all local states less than or equal to $s$
are forbidden, then all states less than or equal to $t$ are also forbidden.
% This relation is reflexive and transitive.
If we know the initial forbidden states and the rejection relation, we can compute the  states
on each process that are forbidden in the initial state or become forbidden when the processes are advanced 
on the forbidden states. This is done by computing the
reflexive transitive closure of the rejection relation. Then, the first local state that is not rejected on every process
gives us the least global state that satisfies the predicate.
Formally,
\begin{definition}[Rejection Relation]
  Given any distributive lattice $\CC$ derived from a poset $P$ with $n$ chains, $P_1, \ldots, P_n$ and a predicate $B$, for $s\in P_i$ and $t\in P_j$ we define,
  \begin{equation*}
      \begin{split}
          rejects(s,t)  \equiv \forall s' \preceq_i &s:  \forall G\in\CC: forbidden(s', G, B) \Rightarrow \\
          &\forall t' \preceq_j t: \forall H\in\CC:forbidden(t', H, B),
      \end{split}
  \end{equation*}
  where $\preceq_i$ is the reflexive order associated with the chain $P_i$.
\end{definition}

We say that a lattice-linear predicate satisfies {\em efficient rejection} if
\begin{definition}[Lattice-linear Predicate with Efficient Rejection]
A lattice linear Boolean predicate $B$ satisfies {\em {efficient rejection}} with respect to
a lattice $\CC$ if there exists a rejection relation such that $rejects(s,t)$ can be computed in \NC.
\end{definition}

We observe here that computing the rejection relation requires as input the entire poset. Hence, the algorithm
derived using {\em efficient rejection} is offline. For example, the algorithm for the 
conjunctive predicate derived using the rejection relation requires the entire poset and is offline.
This algorithm is in contrast to the algorithm derived using {\em efficient advancement} property
\cite{GargWald:WeakUnstable} which is online.

Algorithm \ref{fig:strong-LLP} detects a lattice-linear predicate with efficient rejection in \NC.
This algorithm is similar to the one proposed in \cite{DBLP:conf/icdcn/GargG19} where the algorithm
is proposed for conjunctive predicates and we refer the reader to \cite{DBLP:conf/icdcn/GargG19} for
details. We show that the algorithm is applicable to any
lattice-linear predicate with efficient rejection. Thus, it is also applicable to finding
the shortest path in a directed graph.
%and finding the connected components in an undirected graph.
%We show the lattice-linear predicate with efficient rejection for these problems later in this section.

%\begin{figure}[htbp]
%\fbox{
%\begin{minipage} [b] {7.3in}
%\begin{tabbing}
%\=x\=xxx\=xxx\=xx\=xx\= \kill
\begin{algorithm}
\SetAlgoRefName{LLPwithRejection}
%\> Input: $states: array[1 \ldots n][1 \ldots m]$  of vectorClock // Sequence of local states at each process\\
Output: Consistent Global State as array $G[1 \ldots n]$ \\
{\bf var}\\
\h $G$: array$[1 \ldots n]$ of $1 \ldots m $ initially $1$;\\ 
\h $R$: array$[(1 \ldots n,1 \ldots m),(1 \ldots n,1 \ldots m)]$ of $0 \ldots 1$\\
\h\h\h initially $R[(i,j),(i',j')] = 1$ iff $(i,j)$ rejects $(i',j')$;\\
\h$valid$: array$[1 \ldots n][1\ldots m]$ of $ 0 \ldots 1$\\
\h\h\h initially $\forall i,j: valid[i][j] :=1$;\\
\h$RT:  array[(1 \ldots n,1 \ldots m),(1 \ldots n,1 \ldots m)]$ of $0 \ldots 1$;\\
\BlankLine
$RT := TransitiveClosure(R)$; \\
{\bf for all} $( i \in 1 \ldots n, i' \in 1 \ldots n, j' \in 1 \ldots m)$ {\bf in parallel}  do\\
\h {if} $forbidden(i,1)  \wedge (RT[(i,1), (i',j')] = 1)$ {then} \\
\h \h $valid[i'][j'] := 0$; \\
 {\bf for all} $(i \in 1 \ldots n)$ {\bf in parallel} do\\
\h  if $\forall j \in 1 \ldots m: (valid[i][j] = 0)$ then return false; \\
\h else $G[i] := \min~ \{ j~|~ valid[i][j]=1 \}$;
%\end{tabbing}
\caption{An \NC~algorithm to find the first consistent cut that satisfies an lattice-linear predicate with the rejection relation. \label{fig:strong-LLP}}
\end{algorithm}
%\end{minipage}
% } 
%\end{figure}

We now show how lattice-linear predicates with efficient rejection can be applied to graph reachability.
Suppose that we are given a fixed vertex $v_0$ in a directed graph $(V,E)$. Our goal is to find all
the vertices that are reachable from $v_0$. 
We can use simple BFS to find all reachable vertices. 
Let $G[i]$ be a binary variable such that $G[i]$ is $1$ if $v_i$ is reachable from $v_0$.
Then, the following predicate is true on a binary vector $G$ that corresponds to the reachable vertices.
\[ B_{traverse}  \equiv (G[0] = 1) \wedge (\forall (v_i, v_j) \in E: G[j] \geq G[i]) \]
The predicate $B_{traverse}$ is a lattice-linear predicate and can be employed to find all the reachable vertices.
However, this procedure takes time proportional to the diameter of the graph.

By computing the reflexive transitive closure of the binary graph, we can find all reachable vertices
in polylogarithmic time. Let $A$ be the graph in the matrix form. We compute $\CGG$, the reflexive transitive closure of the matrix $A$, 
such that $\CGG[i,j]$ equals $1$ iff the vertex $v_j$ is reachable from the vertex $v_i$.
We define $B_{closure}$ as
$$\forall i,j: \CGG[i,j] \geq \max (A[i,j], \max  \{\CGG[i,k] \wedge \CGG[k,j] ~|~ k \in [0..n-1] \}).$$
For this problem, our poset has $n^2$ processes and each process has just one event.
The process $P_{(i,j)}$ is forbidden if $G[i,j]$ equals $0$ and $A[i,j]$ equals $1$ or for some $k$, $G[i,k]$ and $G[k,j]$ are both $1$.

We now claim that $B_{closure}$ is a lattice-linear predicate with efficient rejection. 
If for any $(i,j)$, $A[i,j]$ is $1$, then the state $G[k,i]$ rejects $G[k,j]$ (i.e.,
if $i$ is reachable from $k$, then $j$ is also reachable from $k$). Thus, whenever all states equal to or prior to $G[k,i]=0$ are forbidden, then 
so are the states equal to or prior to $(G[k,j]=0)$.
%The rejection relation can be determined in $O(1)$ time since we can compute $A[i,j]$ in $O(1)$ time.
Thus, we have that
the predicate $B_{closure} \equiv \forall i,j: \CGG[i,j] \geq \max (A[i,j], \max  \{\CGG[i,k] \wedge \CGG[k,j] ~|~ k \in [0..n-1] \}) $ is a lattice-linear predicate with efficient rejection.

\section{Conclusions and Open Problems}

We have defined a class of predicates called Equilevel predicates on finite distributive lattices.
We have also identified subclasses of equilevel predicates that can be detected efficiently
in parallel. 
There are many problems that are open in this area.
Are there other subclasses of equilevel predicates or solitary 
predicates that admit efficient detection?
What other problems can be modeled as equilevel predicate detection and do they introduce any specific algorithmic challenges?
Are there techniques that can be used to find approximate solutions of the equilevel predicate detection problem?

On the other end of the spectrum, what properties preclude efficient parallel detection of equilevel predicates?
In this work, we made general statements about the \NP-hardness of detecting equilevel predicates.
However, similar to how we identified properties enabling parallel detection of these predicates, it would be insightful to identify properties that certify \P-hardness (see \cite{greenlaw1995limits}) of the detection of corresponding subclasses of equilevel predicates which, in turn, would rule out the possibility of an efficient parallel solution unless \NC=\P.

\section*{Acknowledgements}
We thank David Alves, and Rohan Garg for discussions on this topic. We would also like to thank anonymous reviewers of this paper for useful comments. This work was supported in parts by the National Science Foundation Grant CNS-1812349 and the Cullen Trust Endowed Professorship.

%\pagebreak 

% We also note here that one can compute a concise representation of all the solutions using the notion of slicing \cite{Birk3,davey,mittal2001computation} for
%any predicate that satisfies lattice-linearity as well as it dual.

\bibliography{fmaster}
% \section{Appendix}
% \section{Background}
% \label{sec:back} 
% In this section, we cover the background information on the LLP Algorithm \cite{DBLP:conf/spaa/Garg20}.
% \input{LLP-common}
\end{document}